\newif\ifsoda
\newif\ifarxiv
\newtheorem{theorem}{Theorem}[section]
\newtheorem{proposition}[theorem]{Proposition}
\newtheorem{lemma}[theorem]{Lemma}
\newtheorem{Definition}[theorem]{Definition}
\newtheorem*{definition*}{Definition}
\newtheorem*{theorem*}{Theorem}
\newtheorem*{lemma*}{Lemma}
\titleformat{\section}{\large\bfseries}{\thesection}{1em}{}
\titleformat{\subsubsection}[runin]
       {\normalfont\bfseries}
       {\thesubsection}
       {0.5em}
       {}
       []
\titleformat{\subsection}[runin]
       {\normalfont\bfseries}
       {\thesubsection}
       {0.5em}
       {}
       []
\newcommand{\Acal}{\mathcal{A}}
\newcommand{\Bcal}{\mathcal{B}}
\newcommand{\Ocal}{\mathcal{O}}
\newcommand{\Scal}{\mathcal{S}}
\newcommand{\Tcal}{\mathcal{T}}
\newcommand{\Vcal}{\mathcal{V}}
\newcommand{\Nbb}{\mathbb{N}}
\newcommand{\Zbb}{\mathbb{Z}}
\newcommand{\bx}{{\boldsymbol x}}
\newcommand{\1}{\mathds{1}}
\newcommand{\ceil}[1]{{\left\lceil #1 \right\rceil}}
\newcommand{\floor}[1]{{\lfloor #1 \rfloor}}
\newcommand{\cost}{\normalfont\text{Cost}}
\newcommand{\cte}{\text{CTE}}
\newcommand{\acte}{\text{ACTE}}
\newcommand{\tm}{\text{TM}}
\renewcommand{\tau}{i}
\title{\Large Breaking the $k/\log k$ Barrier in Collective Tree Exploration via Tree-Mining}
\author{Romain Cosson \\ {\small \texttt{romain.cosson@inria.fr}}}
\date{}
\begin{document}

\ifsoda
\title{\Large Breaking the $k/\log k$ Barrier in Collective Tree Exploration via Tree-Mining}
\author{Romain Cosson\thanks{Inria, Paris.}}
\date{}
\fancyfoot[R]{\scriptsize{Copyright \textcopyright\ 2024 by SIAM\\
Unauthorized reproduction of this article is prohibited}}
\fi

\maketitle
\begin{abstract}
In collective tree exploration, a team of $k$ mobile agents is tasked to go through all edges of an unknown tree as fast as possible.
An edge of the tree is revealed to the team when one agent becomes adjacent to that edge.
The agents start from the root and all move synchronously along one adjacent edge in each round.  
Communication between the agents is unrestricted, and they are, therefore, centrally controlled by a single exploration algorithm. 
The algorithm's guarantee is typically compared to the number of rounds required by the agents to go through all edges if they had known the tree in advance. 
This quantity is at least $\max\{2n/k,2D\}$ where $n$ is the number of nodes and $D$ is the tree depth.
Since the introduction of the problem by \cite{fraigniaud2004collective}, two types of guarantees have emerged: the first takes the form $r(k)(n/k+D)$, where $r(k)$ is called the competitive ratio, and the other takes the form $2n/k+f(k,D)$, where $f(k,D)$ is called the competitive overhead. 
In this paper, we present the first algorithm with linear-in-$D$ competitive overhead, thereby reconciling both approaches. 
Specifically, our bound is in $2n/k + \Ocal(k^{\log_2(k)-1} D)$ and leads to a competitive ratio in $\Ocal(k/\exp(\sqrt{\ln 2\ln k}))$. 
This is the first improvement over $\Ocal(k/\ln k)$ since the introduction of the problem, twenty years ago. 
Our algorithm is developed for an asynchronous generalization of collective tree exploration (ACTE). 
It belongs to a broad class of \textit{locally-greedy} exploration algorithms that we define.
We show that the analysis of locally-greedy algorithms can be seen through the lens of a 2-player game that we call the \textit{tree-mining} game and which could be of independent interest. 
\end{abstract}
\section{Introduction}
Exploration problems on unknown graphs or other geometric spaces have been studied in several fields, e.g., theoretical computer science, computational geometry, robotics and artificial intelligence. The present study concerns the problem of collective tree exploration, introduced in \cite{fraigniaud2004collective}. The goal is for a team of agents or robots, initially located at the root of an unknown tree, to go through all of its edges and then return to the root, as fast as possible. At each round, all robots move synchronously along an adjacent edge to reach a neighboring node. When a robot attains a new node, its adjacent edges are revealed to the team. Following the complete communication model of \cite{fraigniaud2004collective}, we assume that robots can communicate and compute at no cost. The team thus shares at all time a map of the sub-tree that has already been explored and the agents are controlled centrally by a single algorithm.

\subsection{Main result} In this paper, we present an algorithm that achieves collective tree exploration with $k$ robots in $2n/k+\Ocal\left(k^{\log_2(k)-1}D\right)$ synchronous rounds for any tree with $n$ nodes and depth $D$. This algorithm induces a new competitive ratio for collective tree exploration of order $k/\exp(\sqrt{\ln 2\ln k})$, improving over the order $k/\ln k$ competitive ratio of \cite{fraigniaud2004collective}.

Our algorithm is based on a strategy for a two-player game, the \textit{tree-mining game}, that we introduce and analyze. In this game, the player attempts to lead $k$ miners deeper in a tree-mine, where leaves of the tree represent digging positions, while the adversary attempts to hinder their progression. An $f(k,D)$-bounded strategy of the player is one such that the miners are guaranteed to all reach depth $D$ in less than $f(k,D)$ moves. We present such a strategy with $f(k,D)= \Ocal\left(k^{\log_2 k}D\right)$.

The tree-mining game is an abstraction designed to study the competitive overhead of a broad class of exploration algorithms that we call \textit{locally-greedy exploration algorithms}. Though most exploration algorithms described in prior works are effectively locally-greedy, the idea that they form a family of algorithms that can be analyzed from a simpler viewpoint is new. Also, we introduce the setting of \textit{asynchronous collective tree exploration} (ACTE), drawing inspiration from the adversarial setting of \cite{cosson2023efficient} while restraining robot perception capabilities. The guarantees we obtain are initially derived for this generalization of collective tree exploration. 

\subsection{Competitive analysis of collective tree exploration} Competitive analysis is a framework that studies the performance of \textit{online} algorithms relative to their optimal offline counterpart. The origin of this framework dates back to the work of \cite{sleator1985amortized} on the \textit{Move-To-Front} heuristic for the list update problem.
The term `competitive analysis' was coined shortly after by
\cite{karlin1988competitive} in the study of a caching problem. Competitive analysis quickly became the standard tool for rigorous analysis of online algorithms, with many celebrated problems such as the $k$-server problem (see \cite{koutsoupias2009k} for a review), layered graph traversal \cite{papadimitriou1991shortest},
online bipartite matching \cite{goel2008online,karande2011online}, among others. 

The framework of competitive analysis naturally extends to \textit{search} problems, where the time to find a lost ``treasure'' is typically compared to the minimum time required to reach it if its position was known to the searcher. Perhaps the most notable example is the \textit{cow-path problem} \cite{kao1996searching}, also called \textit{linear search problem} \cite{baezayates1993searching}, which traces its origin back to Bellman \cite{bellman1963optimal}. In its simplest version, a farmer located on a 1-dimensional field is searching for its cow, located at some coordinate $x\in \Zbb$. The farmer moves a constant speed. If the farmer knew the position of the cow, it could attain it in only $|x|$ steps. We thus say that the search strategy of the farmer is $c$-competitive if it is guaranteed to find its cow in at most $c|x|$ steps, and we call $c$ the competitive ratio. For this problem, the simple ``doubling strategy'' is $9$-competitive and is optimal among deterministic algorithms \cite{baezayates1993searching}. The competitive ratio can be improved to $4.5911$ with a randomized strategy \cite{karande2011online}. Other search problems have received a competitive analysis, such as \cite{feinerman2012collaborative}.

In this spirit, \cite{fraigniaud2004collective,fraigniaud2006collective} introduced an \textit{exploration} problem called collective tree exploration (CTE). A team $k\in \Nbb$ agents is tasked to go through all the edges of a tree $T$ and return to the origin. The \textit{online} problem can be thought as the situation where the tree is initially unknown to the agents whereas its \textit{offline} counterpart, also known as the $k$-travelling salesmen problem on $T$ \cite{averbakh1996heuristic,averbakh1997p}, corresponds to the situation where the agents are provided a map of $T$ beforehand. 
Denoting by $\texttt{Runtime}(\Acal,k,T)$ the number of rounds required by $k$ agents to explore a tree $T$ with an online exploration algorithm $\Acal$, and denoting by $\texttt{OPT}(k,T)$ its optimal offline counterpart, the competitive ratio of algorithm $\Acal$ is defined as, 
\begin{equation*}
    \texttt{Competitive Ratio}(\Acal,k) = \max_{T\in \Tcal}\frac{\texttt{Runtime}(\Acal,k,T)}{\texttt{OPT}(k,T)}.
\end{equation*}
Though $\texttt{OPT}(k,T)$ can be NP-hard to compute \cite{fraigniaud2006collective}, it is clearly greater than $\max\{2n/k,2D\}$ where $n$ is the number of nodes and $D$ is the depth of the tree \cite{brass2011multirobot}, thus $\texttt{OPT}(k,T)\geq \frac{n}{k}+D$. This bound is in fact tight up to a factor $2$ \cite{dynia2006power, OrtolfS14}. Consequently, we say that an exploration algorithm is $r(k)$-competitive when its runtime is bounded by $r(k)(\frac{n}{k}+D)$ on any tree with $n$ nodes and depth $D$. 

As a simple example, the algorithm leaving $k-1$ robots idle at the origin and using one robot to perform a depth-first search runs in exactly $2(n-1)$ rounds and therefore has a $\Ocal(k)$ competitive ratio. As they introduced the problem, \cite{fraigniaud2004collective} proposed a simple algorithm (later qualified of `greedy' by \cite{higashikawa2014online}) with a guarantee in $\Ocal({n}/{\ln k}+D)$ and thus achieving a $\Ocal({k}/{\ln k})$ competitive ratio, thus providing a logarithmic improvement over depth-first search. In this paper, we present an algorithm with a competitive ratio in $\Ocal(k/\exp(\sqrt{\ln 2\ln k}))$, thus providing an improvement exceeding any poly-logarithmic function of $k$, but which remains far from the best lower-bond on the competitive ratio, which is in $\Omega({\ln k}/{\ln \ln k})$ \cite{dynia2007robots}.

Other types of guarantees for collective tree exploration have been proposed in the literature (see \cite{OrtolfS14, dynia2006smart, brass2011multirobot}, and references therein) but they all have a super-linear dependence in $(n,D)$ and are thus unable to improve the competitive ratio as a function of $k$ only. Interestingly, the work of \cite{brass2011multirobot} proposed a novel competitive viewpoint on collective tree exploration. Their analysis of the algorithm of \cite{fraigniaud2006collective} yields a guarantee in $\frac{2n}{k}+\Ocal((k+D)^k)$, hence with an optimal dependence in $n$ and with an additive cost which does not depend on $n$. This result poses the question of the (additive) \textit{competitive overhead} of an exploration algorithm $\Acal$, which we can formally define as, 
\begin{equation*}
    \texttt{Competitive Overhead}(\Acal,k,D) = \max_{T\in \Tcal(D)} \texttt{Runtime}(\Acal,k,T)-\texttt{OPT}(k,T),
\end{equation*}
where $\Tcal(D)$ denotes the set of all trees with depth $D$. We note that the notion of competitive overhead is immediately related to the notion of \textit{penalty} appearing in the pioneering work of \cite{panaite1999exploring} on graph exploration with a single mobile agent. A $\frac{2n}{k}+\Ocal(D^2\ln k)$ exploration algorithm was later proposed by \cite{cosson2023efficient}, improving over the result of \cite{brass2011multirobot} for all values of $k$ and $D$. 

We summarize the above discussion in Table~\ref{table:1} below. 

\renewcommand{\arraystretch}{1.4}
\setlength{\tabcolsep}{20pt}

\begin{table}[h!]
\centering
\begin{tabular}{||c | c c||} 
 \hline
  $\Ocal(\cdot)$& Competitive Ratio & Competitive Overhead \\
  Runtime  & $r(k)(\frac{n}{k}+D)$ & $\frac{2n}{k}+f(k,D)$ \\[0.5ex] 
 \hline\hline
 \cite{fraigniaud2006collective} & $k/\ln k$ & - \\
 \cite{cosson2023efficient} & - & $\ln kD^2$\\
 This work & $k/\exp(\sqrt{\ln 2\ln k})$ & $k^{\log_2(k)-1}D$\\ [1ex] 
 \hline
\end{tabular}
\caption{Known results on the competitive ratio and the competitive overhead of collective tree exploration. Results are given up to a multiplicative constant.}
\label{table:1}
\end{table}

\subsection{Notations and definitions} In what follows, $\ln(\cdot)$ refers to the natural logarithm and $\log_2(\cdot)$ to the logarithm in base $2$. For an integer $k$ we use the abbreviation $[k] = \{1,\dots,k\}$.

A tree $T=(V,E)$ is defined by its set of nodes $V$ and edges $E \subset V\times V$, and some $\texttt{root}\in V$. A partially explored tree $T=(V,E)$ is a tree where some edges may have a single discovered endpoint. This data structure represents the knowledge a team of explorers in the course of their exploration. 
A (synchronous) \textit{collective tree exploration algorithm} $\Acal$ is a function that, for each explorer in the partially explored tree, selects an adjacent edge that it will traverse at that round. Robots are allowed to stay at their current position (self-loops). The algorithm's performance is evaluated by $\texttt{Runtime}(\Acal,k,T)$ defined as the number of rounds required for $k$ robots initially located at the root, to go through all edges of the tree $T$, and return to the root.

\subsection{Preliminaries} This work provides the first collective exploration algorithm with linear-in-$D$ competitive overhead.

\begin{theorem}[Competitive Overhead]\label{th:intro}
    There exists a collective tree exploration algorithm explorers satisfying for any tree $T$ with $n$ nodes and depth $D$,
    $${\normalfont\texttt{Runtime}}(\Acal,k,T)\leq \frac{2n}{k}+\Ocal(k^{\log_2(k)-1}D).$$  
\end{theorem}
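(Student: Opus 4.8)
The plan is to prove Theorem~\ref{th:intro} through the two abstractions announced above, in four steps. (1) Introduce the asynchronous model ACTE together with the class of \emph{locally-greedy} exploration algorithms, in which every idle robot is always routed along a shortest path toward a currently nearest unexplored edge of the partially-explored tree, ties broken by a fixed rule. (2) Show that any run of a locally-greedy algorithm consists of exactly $2(n-1)$ ``useful'' robot-steps plus an ``overhead'' set of steps (backtracking through explored regions and idling) whose total number is bounded by the cost of a strategy for the tree-mining game. (3) Exhibit such a strategy of cost $f(k,D)=\Ocal(k^{\log_2 k}D)$. (4) Divide the collective step count by $k$ — each robot performs one step per round — to turn the $2(n-1)$ useful steps into $\le 2n/k$ rounds and the $\Ocal(k^{\log_2 k}D)$ overhead steps into $\Ocal(k^{\log_2(k)-1}D)$ overhead rounds, and transfer the bound from ACTE back to the synchronous model, which only loses constant factors.

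\textbf{From exploration to the tree-mining game.} The first half is structural. In a locally-greedy run each edge of $T$ is traversed exactly twice — once on the first descent that discovers it and once on the matching return — so the useful step count is deterministically $2(n-1)$, independent of the algorithm's routing choices; this is the source of the $2n/k$ term, which is also the unavoidable part of $\texttt{OPT}$. For the remaining steps, the key observation is that the evolution of the $k$ robots relative to the frontier of unexplored edges mirrors, move for move, the tree-mining game: a robot heading toward the deepest reachable unexplored edge is a miner descending in the mine; the freedom of the worst-case tree to branch or to stretch into long paths is precisely the adversary's power; and the end of exploration — which, since $\mathrm{depth}(T)=D$, coincides with every active frontier having been pushed past depth $D$ — is the event that all $k$ miners reach depth $D$. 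Consequently, driving the routing decisions of the locally-greedy algorithm by an $f(k,D)$-bounded player strategy caps the number of overhead steps by $f(k,D)$. The first technical obstacle is to make this correspondence airtight: defining the game configuration as a function of the partially-explored tree, arguing that the asynchrony and tie-breaking let one charge each overhead step to exactly one miner move, and checking the passage from ACTE to the synchronous setting of the theorem.

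\textbf{A recursive tree-mining strategy.} The core combinatorial content is the construction in step (3), which I would carry out by recursion on the number of miners $k$. Split the team into two halves of size $k/2$: use one half to force progress along a committed path of depth $D$ while the other half is held in reserve and recursively resolves the residual sub-instances created by the adversary's branching. The recursion tree has depth $\log_2 k$, bottoming out at a single miner (a base case solvable in $\Ocal(D)$ moves against any adversary), and it is arranged so that the cost telescopes to $f(k,D)=\Ocal(k^{\log_2 k}D)$. I expect this to be the main difficulty, for two reasons: one must design the split and the reserve-management rule so that an adaptive adversary cannot strand the reserve miners uselessly, and one must prove the per-level blow-up bound by a potential or amortization argument over the tree-mine. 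Once this is in place, combining steps (2) and (3) and performing the per-robot division of step (4) yields $\texttt{Runtime}(\Acal,k,T)\le 2n/k+\Ocal(k^{\log_2(k)-1}D)$, which is the claim.
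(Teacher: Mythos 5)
Your overall architecture is the paper's: reduce synchronous exploration to the asynchronous setting, analyze locally-greedy algorithms whose routing is driven by a tree-mining player strategy so that the overhead steps are charged to the game cost (this is Theorem \ref{th:mainreduction} via Proposition \ref{prop:targets}), then plug in an $\Ocal(k^{\log_2 k}D)$-bounded strategy and divide by $k$. Two things, however, do not survive scrutiny. First, a definitional inconsistency: in step (1) you fix the routing rule (``shortest path toward a nearest unexplored edge''), but in step (2) you need the routing to be chosen by the tree-mining player. The paper's notion of locally-greedy only constrains what a robot does when it \emph{is} adjacent to an unexplored edge and deliberately leaves the routing of the other robots free (it is encoded by targets updated by the game strategy); if you literally hard-wire nearest-edge routing you recover essentially the classical greedy algorithm, for which no $\Ocal(k^{\log_2 k}D)$ overhead bound is known, and you have no degrees of freedom left to encode the player's strategy.

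Second, and more seriously, step (3) is the heart of the theorem and your sketch of it does not work as stated. In the tree-mining game the player cannot ``force progress along a committed path'' with half the team while ``holding the other half in reserve'': the adversary chooses which active leaf to kill and with how many children (possibly zero), so it can refuse to extend any path you commit to and can collapse whole branches, and every active leaf must retain at least one miner, so there is no free reserve. Moreover, a plain two-halves recursion of depth $\log_2 k$ down to one miner does not by itself produce the exponent $\log_2 k$ on $k$; the bound $k^{\log_2 k}$ comes from a factor of roughly $k$ \emph{per halving level}, because per unit of depth progress one must pay for up to $k$ parallel recursive instances on $\lceil k/2\rceil$ miners each, plus one instance on $k-1$ miners for the last surviving group. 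The paper obtains this by measuring progress with a $(D,d)$-structure (depth of the highest active leaf and of the LCA of active leaves), running an epoch that ends when either all groups have advanced by $D-d$ (condition \eqref{eq:split}) or all miners have merged into one group so that $d$ advances by $D-d$ (condition \eqref{eq:join}), which yields the recursion $c_k=c_{k-1}+2kc_{\ceil{k/2}}+20k^2$ and hence $c_k=\Ocal(k^{\log_2 k})$. Your proposal contains neither a progress measure that the adversary cannot stall (your committed path gives it an easy stalling move) nor an accounting of why the per-level blow-up is $k$ rather than $2$, so the claimed telescoping to $\Ocal(k^{\log_2 k}D)$ is unsupported; this is a genuine missing idea, not a routine technicality.
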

This result also improves the competitive ratio of collective tree exploration. Surprisingly, for that purpose, we shall only use a fraction of the robots. 
\begin{lemma}\label{lemma:reduction}
    For any collective tree exploration algorithm $\Acal$ with competitive overhead in $\Ocal\left(\frac{f(k)}{k}D\right)$, where $f$ is some increasing real-valued function, there exists a collective tree exploration algorithm $\Acal'$ with competitive ratio in $\Ocal\left(\frac{k}{f^{-1}(k)}\right)$, where $f^{-1}$ denotes the inverse of $f$. $\Acal'$ is simply defined as the algorithm using $\Acal$ with a fraction $k' = \floor{f^{-1}(k)}$ of the robots, while maintaining all other robots idle at the root. 
\end{lemma}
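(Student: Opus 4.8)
The plan is to verify directly that the algorithm $\Acal'$ described in the statement meets the claim, the crux being that the $k-k'$ robots left idle at the root never move and are already at their destination, so they contribute nothing: for every tree $T$ one has $\texttt{Runtime}(\Acal',k,T)=\texttt{Runtime}(\Acal,k',T)$ with $k'=\floor{f^{-1}(k)}$. It then suffices to feed $k'$ through the hypothesis on $\Acal$ and compare the result with $\texttt{OPT}(k,T)$. (Throughout I take $k$ large enough that $k'\ge 1$; this is harmless since the competitive ratio is an asymptotic quantity in $k$, and for the bounded range $\Acal'$ may simply be taken to be the $\Ocal(k)$-competitive single-robot depth-first search.)

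Unpacking the hypothesis, $\Acal$ having competitive overhead $\Ocal(\tfrac{f(k)}{k}D)$ means that on every tree $T$ with $n$ nodes and depth $D$ we have $\texttt{Runtime}(\Acal,k',T)\le \texttt{OPT}(k',T)+\Ocal(\tfrac{f(k')}{k'}D)$. I would then invoke two facts recorded in the text: the offline upper bound $\texttt{OPT}(k',T)=\Ocal(n/k'+D)$ — the companion, tight up to a constant, of the lower bound $\texttt{OPT}(k',T)\ge n/k'+D$ — and the monotonicity of $f$, which since $k'\le f^{-1}(k)$ gives $f(k')\le f(f^{-1}(k))=k$. Plugging both in and using $k'\le k$ (so that $D\le\tfrac{k}{k'}D$ and $\tfrac{n}{k'}=\tfrac{k}{k'}\cdot\tfrac{n}{k}$) yields
\[
\texttt{Runtime}(\Acal',k,T)\;\le\;\Ocal\!\Big(\tfrac{n}{k'}+\tfrac{k}{k'}D\Big)\;=\;\Ocal\!\Big(\tfrac{k}{k'}\Big)\Big(\tfrac{n}{k}+D\Big)\;\le\;\Ocal\!\Big(\tfrac{k}{k'}\Big)\,\texttt{OPT}(k,T),
\]
the last step by $\texttt{OPT}(k,T)\ge \tfrac{n}{k}+D$. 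Finally, since $k'=\floor{f^{-1}(k)}\ge\tfrac12 f^{-1}(k)$ once $f^{-1}(k)\ge 1$, we get $k/k'=\Ocal(k/f^{-1}(k))$, which is exactly the asserted competitive ratio for $\Acal'$.

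There is no genuine difficulty here; the points to get right are bookkeeping ones. The one worth flagging is that $\texttt{OPT}(k',T)$ must be controlled through the offline \emph{upper} bound $\Ocal(n/k'+D)$, not through $\texttt{OPT}(k,T)$ — with fewer robots the offline optimum can only increase, by a factor up to $k/k'$, and that factor is precisely the price paid by the reduction; dually, and this is what makes the trade-off break even, the overhead term shrinks from $\tfrac{f(k')}{k'}D$ to the matching $\tfrac{k}{k'}D$, so that both contributions end up uniformly $\Ocal(k/k')$ times the trivial lower bound $\tfrac{n}{k}+D$. The remaining small caveats are the trivial range $k<f(1)$ and the floor estimate $\floor{f^{-1}(k)}\ge\tfrac12 f^{-1}(k)$ valid for $f^{-1}(k)\ge 1$.
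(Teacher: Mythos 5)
Your proof is correct and follows essentially the same route as the paper's: bound the runtime of $\Acal$ on $k'$ robots, use monotonicity to get $f(k')\leq k$, factor out $\Ocal(k/k')$ against $\texttt{OPT}(k,T)\geq \frac{n}{k}+D$, and control the floor. Your only divergence is bookkeeping — you pass through $\texttt{OPT}(k',T)=\Ocal(n/k'+D)$ where the paper reads the overhead guarantee directly as a $\frac{2n}{k'}+\Ocal(\frac{f(k')}{k'}D)$ runtime bound — and your handling of the floor and of small $k$ is, if anything, slightly more careful than the paper's $k'\geq f^{-1}(k)-1\geq 1$.
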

\begin{proof}
    The runtime of the algorithm $\Acal'$ is bounded by $\frac{2n}{k'}+\Ocal\left(\frac{f(k')}{k'}D\right)$. Since we have $f(k')\leq k$, we get, 
    $$\texttt{Runtime}(\Acal',k,T)\leq \Ocal\left(\frac{k}{k'}\right)\left(\frac{n}{k}+D\right).$$
    Observing that $k'\geq f^{-1}(k)-1\geq 1 $ allows to conclude.
\end{proof}
We then apply Lemma \ref{lemma:reduction} and Theorem \ref{th:intro} to $f(k) = k^{\log_2(k)}$ to get the following result. Note that the quantity $\exp\left(\sqrt{\ln 2 \ln k}\right) = k^{\frac{1}{\sqrt{\log_2(k)}}}$, which appears below, is asymptotically greater than any poly-logarithmic function of $k$, but smaller than any polynomial of $k$.
\begin{theorem}[Competitive Ratio]
There exists a $\mathcal{O}\left(k/\exp\left(\sqrt{\ln 2 \ln k}\right)\right)$-competitive collective tree exploration algorithm.
\end{theorem}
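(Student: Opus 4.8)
The plan is to instantiate Lemma~\ref{lemma:reduction} with the exploration algorithm $\Acal$ furnished by Theorem~\ref{th:intro} and with the choice $f(k) = k^{\log_2 k}$. First I would rewrite the competitive-overhead bound $\Ocal(k^{\log_2(k)-1}D)$ of Theorem~\ref{th:intro} in the shape demanded by Lemma~\ref{lemma:reduction}: since $k^{\log_2(k)-1} = k^{\log_2(k)}/k = f(k)/k$, the overhead is $\Ocal\!\left(\tfrac{f(k)}{k}D\right)$. To apply the lemma I would then check that $f$ is increasing on the relevant range $k \ge 2$, which is immediate because both the base $k$ and the exponent $\log_2 k \ge 1$ are nondecreasing there; in particular $f^{-1}$ is well defined and $f^{-1}(k)\ge 1$ for all $k\ge 1$, which is what legitimises the idle-robot construction in the lemma.

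Next I would compute $f^{-1}$ explicitly. Setting $y = f(x) = x^{\log_2 x}$ and taking $\log_2$ of both sides gives $\log_2 y = (\log_2 x)^2$, hence $\log_2 x = \sqrt{\log_2 y}$ and $x = 2^{\sqrt{\log_2 y}}$, i.e. $f^{-1}(k) = 2^{\sqrt{\log_2 k}}$. The last ingredient is the change of base, $2^{\sqrt{\log_2 k}} = e^{(\ln 2)\sqrt{\log_2 k}} = e^{\sqrt{(\ln 2)^2 \log_2 k}} = e^{\sqrt{(\ln 2)(\ln k)}} = \exp\!\bigl(\sqrt{\ln 2\,\ln k}\bigr)$, consistent with the identity $\exp(\sqrt{\ln 2 \ln k}) = k^{1/\sqrt{\log_2 k}}$ recorded just above the statement.

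Finally, plugging into the conclusion of Lemma~\ref{lemma:reduction}, the algorithm $\Acal'$ that runs $\Acal$ on $k' = \lfloor f^{-1}(k)\rfloor = \lfloor \exp(\sqrt{\ln 2\,\ln k})\rfloor$ robots and keeps the remaining $k-k'$ idle at the root has competitive ratio $\Ocal\!\left(k/f^{-1}(k)\right) = \Ocal\!\left(k/\exp(\sqrt{\ln 2\,\ln k})\right)$, which is exactly the claim.

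I do not expect a genuine obstacle in this final step: all the substance lies in Theorem~\ref{th:intro}, whose linear-in-$D$ competitive overhead is where the tree-mining analysis is used, and in the reduction Lemma~\ref{lemma:reduction}, both of which we may assume. The only points requiring (elementary) care here are verifying monotonicity of $f$ so that $f^{-1}$ makes sense, confirming $k'\ge 1$, and carrying out the logarithm manipulation that identifies $2^{\sqrt{\log_2 k}}$ with $\exp(\sqrt{\ln 2\,\ln k})$.
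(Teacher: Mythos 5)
Your proposal is correct and follows exactly the paper's route: the paper also obtains this theorem by applying Lemma~\ref{lemma:reduction} together with Theorem~\ref{th:intro} to $f(k)=k^{\log_2 k}$, using the same identity $f^{-1}(k)=2^{\sqrt{\log_2 k}}=\exp\bigl(\sqrt{\ln 2\,\ln k}\bigr)=k^{1/\sqrt{\log_2 k}}$. Your added checks (monotonicity of $f$, $k'\ge 1$, the base-change computation) are just the elementary details the paper leaves implicit.
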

The algorithm $\Acal$ of Theorem \ref{th:intro} belongs to a broad class of algorithms that we call \textit{locally-greedy exploration algorithms}. This class relaxes the notion of greedy exploration algorithms of \cite{higashikawa2014online}. We shall see in Section \ref{sec: red} that the competitive overhead of locally-greedy exploration algorithm can be tightly analyzed through the tree-mining game, studied in Section \ref{sec: game}.
\begin{Definition}[Locally-Greedy CTE] A locally-greedy collective tree exploration algorithm is such that at any synchronous round, if a node is adjacent to $e$ unexplored edges and is populated by $x$ robots, then at the next round, this node will be adjacent to exactly $\max\{0,e-x\}$ unexplored edges. 
\end{Definition}
Another way to define locally-greedy algorithms is to assume that robots move sequentially rather than synchronously. In this context, a locally-greedy algorithm is one in which the moving robot always selects an unexplored edge when one is incident. We call this setting \textit{asynchronous collective tree exploration} (ACTE) because we let an adversary choose at each round the robot which is allowed to perform move. In this setting, a locally-greedy algorithm is one such that a moving robot always prefers a to traverse an unexplored edge if possible. The setting of asynchronous collective tree exploration is inspired from the adversarial setting of \cite{cosson2023efficient}, with an additional limitation on the perception capabilities of the robots. Note that it is defined here in the complete communication model, where the agents are controlled centrally by one algorithm. An asynchronous version of collective tree exploration remains to be defined in distributed variants of collective tree exploration, such as the `write-read' communication model of \cite{fraigniaud2006collective}. 

\subsection{Paper outline} In Section \ref{sec: game}, we study the tree-mining game and present a $\Ocal(k^{\log_2 k})$-bounded strategy for the player of the game. In Section \ref{sec: red} we describe the reductions connecting the tree-mining game to the analysis of the overhead of locally-greedy algorithms for synchronous and asynchronous collective tree exploration. Finally, Appendix~\ref{sec: calculus} and Appendix \ref{sec:details} contain important technical details to support the tree-mining strategy defined in Section \ref{sec: game}.

\section{The tree-mining game}\label{sec: game}
\subsection{Game description}\label{sec: game-description}
We now describe the \textit{tree-mining game}. As we will see in Section \ref{sec: red}, this game provides an analysis of locally-greedy algorithms for collective tree exploration.

\subsubsection*{The board} Let $k\geq 2$ be some integer. The board of the game is defined by a pair $\Tcal = (T, \bx)$ where $T=(V,E,L)$ is a rooted tree with nodes 
$V \subset \Vcal$ edges $E\subset V\times V$ and a set of \textit{active leaves} $L \subset V$, and where 
$\bx = (x_\ell)_{\ell\in L}\in\Nbb^L$ is a configuration representing the number of miners located at active leaves of $T$, for a total of $k$ miners, i.e. $\sum_{\ell\in L} x_\ell=k$. There is at least one miner per active leaf. The set of all possible boards is denoted by $\mathfrak{T}$. The board at round $\tau \in \Nbb$ is denoted by $\Tcal(\tau) = (T(\tau),\bx(\tau))$. The game starts with $T(0)$ reduced to a single active leaf, on which all $k$ miners are located.

\subsubsection*{Adversary} At the beginning of round $\tau$, the adversary observes $\Tcal(\tau)$ and is the first to play. It chooses an active leaf $\ell(\tau)\in L(\tau)$ which is deactivated and provides it with a number of active children $c(\tau)\in \Nbb$, which has to be less or equal to $x_{\ell(\tau)}(\tau)-1$. This choice entirely determines $T(\tau+1)$. The adversary thus controls the evolution of the tree structure with a strategy of the form,
\begin{align*}
s_a: \quad \quad \mathfrak{T} &\longrightarrow V\times\Nbb\\
 \Tcal&\longmapsto (\ell,c).
\end{align*}

\subsubsection*{Player} The move of the adversary is observed by the player, along with the state of the board. The player must then relocate the $x_{\ell(\tau)}(\tau)$ miners on the deactivated leaf. The player is required to send at least one miner to each of the $c(\tau)$ new children in $\ell(\tau)$, and she is free to choose any destination of $L(\tau+1)$ for the remaining miners. This choice entirely determines $\bx(\tau+1)$. The player thus controls the evolution of the configuration with a strategy of the form, 
\begin{align*}
s_p: \mathfrak{T} \times V\times \Nbb &\longrightarrow \mathfrak{T}\\
 (\Tcal,\ell,c)&\longmapsto \Tcal'.
\end{align*}
If the adversary deactivates all active nodes, providing them with no child, the game is considered finished.

\subsubsection*{Cost}
At each round $t$, the cost of the game is updated. The cost counts the total number of moves that were performed by the miners. A discount of $2$ can be given for any new edge created by the adversary, but it will be neglected in most of the analysis\footnote{The discount accounts for the fact that the creation of a leaf with a single miner and its subsequent deletion should not affect the overall cost, since the corresponding edge is traversed exactly twice. It will only be used for more precise results with $k\in \{2,3\}$.}.
\begin{align*}
\cost(\tau+1) = \cost(\tau) + d(\bx(\tau),\bx(\tau+1))-2c(\tau),
\end{align*}
where $d(\bx,\bx')$ denotes the transport distance in the tree between two configurations $\bx$ and $\bx'$. The goal of the player is to have the miners go deeper in the mine while enduring a limited cost.

\subsubsection*{Bounded strategies} We denote by $\Scal_p$ (resp. $\Scal_a$) the set of all valid strategies for the player (resp. the adversary). For a pair $(s_p,s_a) \in \Scal_p\times \Scal_a$, and some depth $D\in \Nbb$, we define $\cost(s_p,s_a,k,D)$ as the maximum cost attained while some miner is at depth less or equal to $D$, if the player (resp. the adversary) uses $s_p$ (resp. $s_a$). For a strategy of the player $s_p\in \Scal_p$, we then define $f_{s_p}(k,D)$ as follows, 
\begin{align*}
f_{s_p}(k,D) &= \max_{s_a \in \Scal_a} \cost(s_p,s_a,k,D)
\end{align*}
For a bi-variate function $f(\cdot,\cdot)$, we say that $s_p$ is $f(k,D)$-bounded if we have $\forall k,D: f_{s_p}(k,D)\leq f(k,D)$. In Figure \ref{fig:first_illustration}, we provide an illustration of a few rounds of the tree-mining game with $k=3$. The interest of the tree-mining game lies in its close connection to collective exploration, highlighted by the following result.
\begin{theorem}[from Section \ref{sec: red}]\label{th: reduction} Any $f(k,D)$-bounded strategy for the tree-mining game induces a collective tree exploration algorithm $\Acal$ satisfying for any tree $T$ with $n$ nodes and depth $D$,  
$${\normalfont\texttt{Runtime}}(\Acal,k,T) \leq \ceil{\frac{2n}{k}+\frac{f(k,D)}{k}} + D.$$
\end{theorem}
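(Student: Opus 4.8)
The plan is to exhibit a concrete simulation that runs a locally-greedy exploration algorithm while piggy-backing on an $f(k,D)$-bounded tree-mining strategy $s_p$, then to account for the rounds of the resulting exploration in two buckets: ``productive'' rounds that traverse a fresh edge toward the frontier, and ``overhead'' rounds that merely relocate robots within the already-explored subtree. First I would set up the correspondence between a partially explored tree $T$ and a board $\Tcal = (T', \bx) \in \mathfrak{T}$: the active leaves $L$ of the board are exactly the nodes of the explored subtree that still have undiscovered incident edges (the ``frontier''), and $x_\ell$ is the number of robots we have committed to that frontier node. The key point is that in a locally-greedy algorithm, when $x$ robots sit at a frontier node with $e$ undiscovered edges, after one synchronous round that node has $\max\{0, e-x\}$ undiscovered edges — so each such node either gets fully resolved (its children all discovered, and at least one robot descends into each) or spawns exactly $x$ new frontier children. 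This is precisely the adversary's move $(\ell, c)$ with $c \le x_\ell - 1$ when $e \ge x$, and $c = e$ otherwise; the remaining $\max\{0, x-e\}$ robots are free to be redistributed, which is exactly the player's move. Thus one ``expansion step'' of the exploration, in which some frontier node is resolved or refined, corresponds to one round of the tree-mining game, and the player strategy $s_p$ tells us where to send the leftover robots.

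Next I would translate the cost bookkeeping. The tree-mining cost $\cost(\tau)$ counts transport distance of miners minus the $2c(\tau)$ discount for newly created edges. In the exploration, the robot moves fall into three categories: (i) the forced descent of (at least) one robot along each of the newly discovered edges — these are the ``discount'' moves, two per new edge counted over the eventual return trip, and there are $n-1$ edges total, contributing $2(n-1)$; (ii) the synchronous ``wait or shuffle'' moves by which leftover robots travel from a resolved frontier node to their new assigned frontier node, whose total length over the whole run is exactly $\sum_\tau d(\bx(\tau), \bx(\tau+1))$, i.e. bounded by $f_{s_p}(k,D) + 2(\text{\#edges created}) = f_{s_p}(k,D) + 2(n-1)$ before removing the discount, hence $\le f(k,D)$ after accounting properly; and (iii) at most $D$ extra rounds at the end to march everyone back to the root once all edges are explored. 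Because $k$ robots move in parallel, the number of synchronous rounds needed to absorb a total of $M$ robot-moves is $\ceil{M/k}$, which is where the $\ceil{2n/k + f(k,D)/k}$ term comes from, and the final walk to the root adds the $+D$.

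The main obstacle — and the step I would spend the most care on — is the \emph{synchronization} mismatch: the tree-mining game is ``one move per round'' (the adversary resolves a single leaf), whereas a synchronous exploration round moves \emph{all} $k$ robots simultaneously, so many frontier nodes may be resolved in the same physical round. I would handle this by decoupling the two clocks: let the game run at its own pace, generating a sequence of board transitions, and independently schedule the induced robot moves into synchronous rounds greedily, always advancing each robot along the next edge of its prescribed route. One must check that this scheduling is feasible — a robot never needs to be in two places at once, and the locally-greedy invariant is preserved round by round — and that the total number of rounds is bounded by (total moves)$/k$ rounded up, plus the depth for the final return. A secondary subtlety is that the depth parameter $D$ of the tree-mining game must be the depth of the explored tree, which coincides with the depth of $T$; since the board tree is (isomorphic to) the BFS-like skeleton of the explored subtree, frontier nodes at depth $d$ in the board sit at depth $d$ in $T$, so ``all miners reach depth $D$'' corresponds to ``all edges at depth $\le D$ discovered'', i.e. the whole tree is explored, and $\cost(s_p, s_a, k, D) \le f(k,D)$ applies. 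Assembling these pieces gives $\texttt{Runtime}(\Acal, k, T) \le \ceil{2n/k + f(k,D)/k} + D$.
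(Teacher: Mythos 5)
Your overall architecture (a board tracking where robots are committed, a three-way bucket of robot moves into fresh-edge traversals, relocations charged to the game cost, and a final return, then division by $k$) parallels the paper's, but the two steps you defer or assert are exactly where the paper's proof does its real work, and as written they do not go through. First, your board correspondence takes the active leaves to be \emph{all frontier nodes} of the explored subtree. When a frontier node holds $x$ robots and $e\ge x$ undiscovered edges, you have it ``spawn exactly $x$ new frontier children'' while remaining unresolved: this is not a legal game transition (the adversary must deactivate the chosen leaf and may give it only $c\le x_\ell-1$ children), and it also breaks the invariant that every active leaf carries at least one miner (you now have $x+1$ frontier nodes but only $x$ committed robots). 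The paper avoids this by making the board the tree of \emph{targets}, not the frontier: a game move is generated only at the instant a robot standing at its target $u$ finds no unexplored edge, and $c$ is then the number of robots targeting $u$ that lie strictly below $u$; since the triggering robot is at $u$, the constraint $c\le x_u-1$ holds automatically and each newly activated child already contains a robot.

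Second, the claim that $M$ robot-moves can always be packed into $\lceil M/k\rceil$ synchronous rounds is precisely the synchronization problem you flag and leave unchecked, and it cannot be fixed by scheduling ``prescribed routes'' after the fact, because which routes exist depends on what has been discovered, which in turn depends on the schedule. The paper resolves this by interposing the asynchronous model (ACTE): the exploration algorithm is defined so that any robot granted a move always has one (take an unexplored edge, else step toward its target, never station), the bound $\texttt{Moves}(\Bcal,k,T)\le 2n+f(k,D)$ is proved for an \emph{arbitrary} granting sequence via the potential $d_t=d(p_t,v_t)$ (showing the number of explored edges is at least $\tfrac12(M-S)$, where $S$ is the total target movement, which equals the game cost), and the synchronous algorithm is then the round-robin simulation, yielding $\lceil M/k\rceil+D$ rounds. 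Relatedly, your bucket (ii) accounting (``relocation moves total exactly the transport distance between configurations'') ignores the upward moves of robots climbing back toward their targets after finishing a subtree; the factor $2$ in the potential argument is what absorbs these, and your informal ``two moves per new edge'' pairing would need to be made precise along those lines. Until the target-based board and the asynchronous (or otherwise rigorous) packing argument are supplied, the proposal has genuine gaps at both of these points.
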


We will spend the rest of this section constructing a $\Ocal(k^{\log_2 k}D)$-bounded strategy for the tree-mining game.

\begin{figure}[h!]
\centering
\includegraphics[trim={0 4cm 0 0.5cm},width=\textwidth]{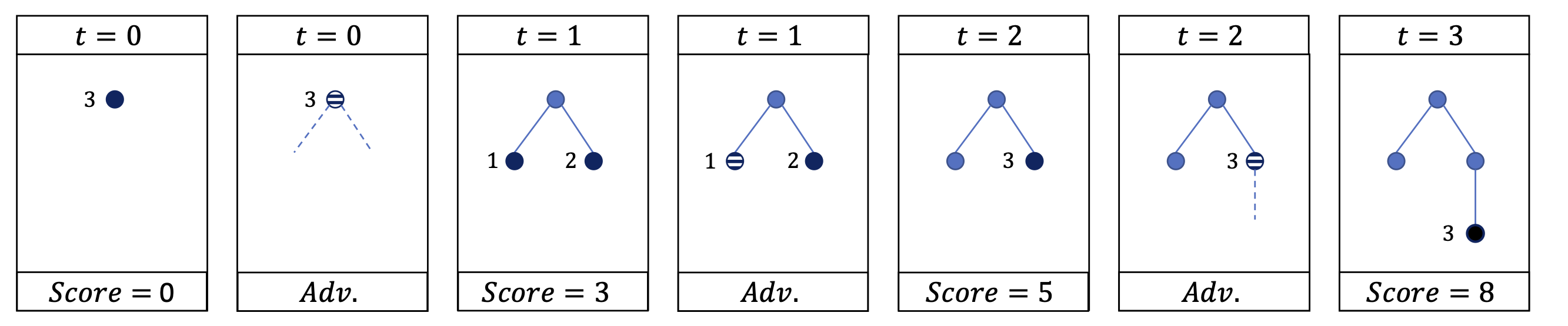}
\caption{Example of a game with $k=3$. The cost is tracked through time. The abbreviation \textit{Adv.} stands for ``Adversary'' and corresponds to the representation of the instant right after the decision of the adversary. Active leaves are represented in dark blue while other nodes are in light blue. }
\label{fig:first_illustration}
\end{figure}

\subsection{Tree-mining with two miners}\label{sec:ksmall} For $k=2$, it is clear that the board of the game at round $\tau$ corresponds to a line-tree with $\tau$ edges and depth $\tau$. The leaf of this line-tree contains two miners. The adversary has no other choice than select this leaf and provide it with one child, and the player must send both miners to this new child. We note that the cost never increases and we thus state,
\begin{proposition}\label{prop: kequals2} For $k=2$, there is a unique strategy available to the player, and it satisfies $f_{s_p}(2,D)=0$.
\end{proposition}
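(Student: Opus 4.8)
The plan is to argue by induction on the round $\tau$ that, when $k=2$, the board $\Tcal(\tau)$ is completely determined: $T(\tau)$ is a path of length $\tau$ rooted at one end, and both miners sit on its unique active leaf, which is at depth $\tau$. The base case $\tau=0$ is exactly the initial configuration. For the inductive step, assume the claim at round $\tau$. Since there is a single active leaf $\ell(\tau)$, the adversary is forced to deactivate it, and the constraint $c(\tau)\le x_{\ell(\tau)}(\tau)-1=1$ leaves only $c(\tau)\in\{0,1\}$. If $c(\tau)=0$, the unique active node is deactivated with no child, so the game ends. If $c(\tau)=1$, a single new active leaf is created at depth $\tau+1$; the player must relocate the two miners from $\ell(\tau)$, sending at least one to the new child, and since the new child is the only element of $L(\tau+1)$, both miners are sent there. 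This yields exactly the claimed board at round $\tau+1$, and it also shows, along the way, that the player never has a genuine choice, so the strategy $s_p$ is unique.

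It then remains to track the cost. At any round with $c(\tau)=1$, the two miners both move from depth $\tau$ to depth $\tau+1$ along the single edge just added, so the transport distance between $\bx(\tau)$ and $\bx(\tau+1)$ equals $2$, which is exactly cancelled by the discount $2c(\tau)=2$ for the newly created edge; hence $\cost(\tau+1)=\cost(\tau)$. Starting from $\cost(0)=0$, the cost is identically zero throughout the game, so in particular the maximum cost attained while some miner is at depth at most $D$ is zero. Since the adversary's choices do not affect this value, we conclude $f_{s_p}(2,D)=\max_{s_a\in\Scal_a}\cost(s_p,s_a,2,D)=0$.

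There is no serious obstacle here; the only point requiring care is the bookkeeping of the cost update, namely verifying that $d(\bx(\tau),\bx(\tau+1))=2$ because each miner traverses exactly one edge, and that this is precisely offset by the $-2c(\tau)$ discount term, so that the discount is in fact what makes the bound $0$ rather than $2D$. One should also make explicit that no board with two distinct active leaves can ever arise when $k=2$: this is immediate from the ``at least one miner per active leaf'' rule together with the fact, established in the induction, that every relocation is forced to the single newly created child.
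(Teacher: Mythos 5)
Your proof is correct and follows essentially the same route as the paper: the paper also observes that the board is forced to be a line-tree with both miners on its unique active leaf, that the adversary's and player's moves are therefore unique, and that the per-round transport cost of $2$ is exactly cancelled by the discount $2c(\tau)$, so the cost never increases. Your version merely makes the induction and the cost bookkeeping explicit, which the paper states informally.
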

Note that in light of Theorem \ref{th:mainreduction}, this readily recovers the collective tree exploration algorithm of \cite{brass2011multirobot} for $k=2$ robots, which runs in $n+D$ synchronous steps. 

\subsection{Tree-mining with three miners} The case $k=3$ is a variant of the aforementioned cow-path problem, for which the doubling strategy is optimal (see \cite{kao1996searching} and references therein). Observe that at any round, there are at most two active leaves in the board. The tree defined by their ancestors is called the \textit{active sub-tree}. This tree can be represented by a triple $(d,\delta_1,\delta_2)$ where $d$ is the depth of the lowest common ancestor of the active leaves, $d+\delta_1$ is the depth of the alone miner and $d+\delta_2$ is the depth of the other two miners (if all three miners are on the same node, we let $\delta_1=\delta_2=0$). We consider the following strategy for the player, after the adversary's move. An illustration of the strategy is provided in Figure \ref{fig:keq3}.
    \begin{itemize}
        \item[--] \textbf{if} the adversary chooses the node with two miners and provides it with one child \textbf{then}
        \begin{itemize}
            \item \textbf{if} $\delta_2 < 2\delta_1-1$ \textbf{then} the player assigns both miners to go to the new leaf,
            \item \textbf{if} $\delta_2= 2\delta_1-1$ \textbf{then} the player sends one of both miners to the other leaf at distance $\delta_1+\delta_2$,
        \end{itemize}
        \item[--] \textbf{else} the player performs the only possible move. 
    \end{itemize}
\begin{figure}[h!]
\centering
\includegraphics[trim={0 4cm 0 1.5cm},width=\textwidth]{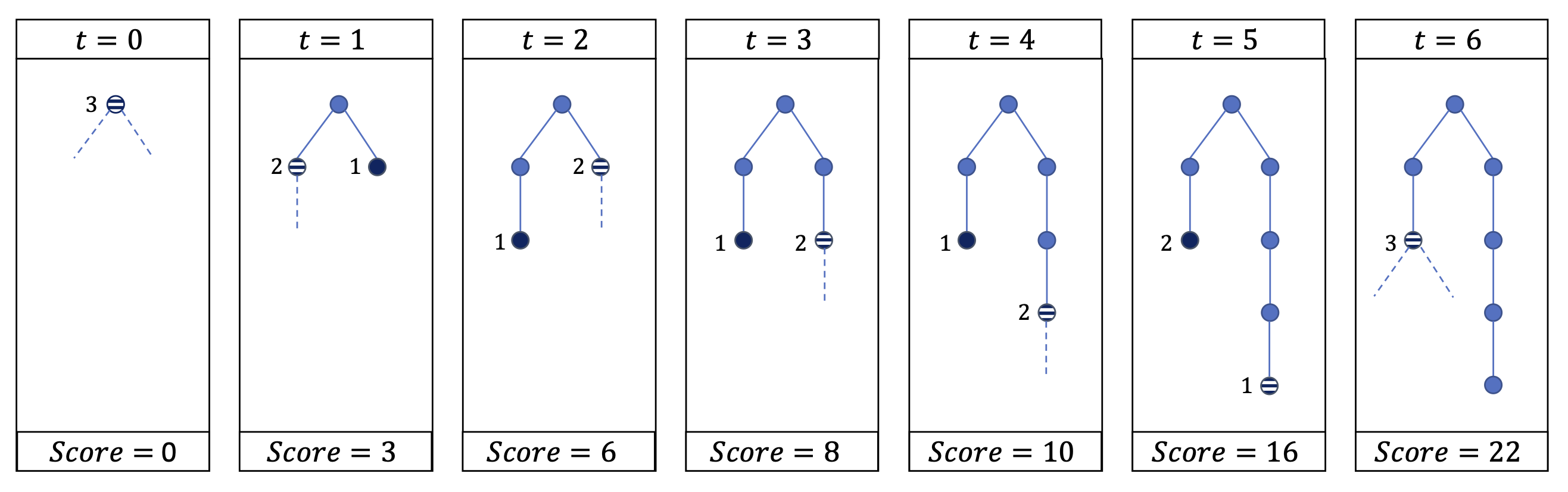}
\caption{Example of a game with $3$ miners where the player uses the strategy described above. The board is represented once per round $\tau\leq 6$, after the adversary's move which is represented as dashed. Active leaves are in dark blue.}
\label{fig:keq3}
\end{figure}
We now provide an analysis, inspired from dynamic programming, which will shed light on the more general case for which $k\geq 4$. In the strategy above, it appears clearly that if at some round the active sub-tree is of the form $(d,\delta, \delta)$, with $\delta\geq 1$, then it attains in a finite number of rounds an active sub-tree that is either of the form $(d,2\delta,2\delta)$ or of the form $(d',0,0)$ with $d+\delta\leq d'\leq d+2\delta$. 
In the first case, the cost increases by at most $3\delta$. In the other case, the cost increases by at most $11\delta\leq 11(d'-d)$. We call such a sequence of rounds an \textit{epoch} and we observe that the algorithm works by iterating epochs indefinitely. We also call by epoch the single round that goes from an active sub-tree of the form $(d,0,0)$ to an active sub-tree of the form $(d+1,0,0)$ or $(d,1,1)$, with an associated cost of at most $1$. We now derive the following lemma. 

\begin{lemma}\label{lemma:keq3} For any strategy of the adversary, if an epoch ends with a board of the form $(d,\delta,\delta)$ then the corresponding cost of the tree mining game is at most $11d+3(d+\delta)$.
\end{lemma}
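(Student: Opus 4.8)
The plan is to prove the bound by induction over the succession of epochs, taking the inequality in the statement itself as the inductive invariant. Since the play of the game decomposes into epochs, each of which carries the board from one epoch-terminal board to the next, and the play begins at the board $(0,0,0)$, it suffices to establish the following: \emph{every} board $(d,\delta,\delta)$ arising at the end of an epoch (allowing $\delta=0$, so that this also covers the degenerate boards $(d',0,0)$ produced by merging epochs and by the line-extending single-round epochs) has cost at most $11d+3(d+\delta)=14d+3\delta$. The base case is the initial board $(0,0,0)$, whose cost is $0$.

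For the inductive step I would let the current epoch begin at an epoch-terminal board $(d_0,\delta_0,\delta_0)$ whose cost is at most $14d_0+3\delta_0$ by the inductive hypothesis, and run through the possible epochs, invoking the per-epoch cost increments recorded in the preceding discussion. (i) A single-round epoch from $(d_0,0,0)$ to $(d_0+1,0,0)$ adds at most $1$, and $14d_0+1\leq 14(d_0+1)$. (ii) A single-round epoch from $(d_0,0,0)$ to $(d_0,1,1)$ adds at most $1$, and $14d_0+1\leq 14d_0+3$. (iii) A doubling epoch from $(d_0,\delta_0,\delta_0)$ with $\delta_0\geq 1$ to $(d_0,2\delta_0,2\delta_0)$ adds at most $3\delta_0$, and $14d_0+3\delta_0+3\delta_0=14d_0+3(2\delta_0)$. (iv) A merging epoch from $(d_0,\delta_0,\delta_0)$ with $\delta_0\geq 1$ to some $(d',0,0)$ with $d_0+\delta_0\leq d'\leq d_0+2\delta_0$ adds at most $11\delta_0$, and $14d_0+3\delta_0+11\delta_0=14(d_0+\delta_0)\leq 14d'$. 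In each case the resulting epoch-terminal board again satisfies the invariant, which closes the induction and yields the lemma.

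The only genuine content is identifying that the correct quantity to track is precisely the linear combination $11\cdot(\text{depth of the common ancestor})+3\cdot(\text{depth of the deeper miners})$: the constants $11$ and $3$ are exactly the ones for which all four cases close, case (iii) holding with equality so that the coefficient $3$ cannot be reduced, and case (iv) being where the lower bound $d'\geq d_0+\delta_0$ on the depth gained during a merging epoch is used, which is what fixes the ratio between the two coefficients. Beyond this, the verification is routine bookkeeping, and the argument is uniform over adversary strategies precisely because the per-epoch increments ($\leq 1$ for a single round, $\leq 3\delta$ for a doubling epoch, $\leq 11\delta\leq 11(d'-d)$ for a merging epoch) and the description of which epoch-terminal boards are reachable are themselves uniform over the adversary. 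I therefore expect no real obstacle once those per-epoch facts are taken as given from the text above.
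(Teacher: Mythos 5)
Your proof is correct and follows essentially the same route as the paper: induction over epochs with the invariant $11d+3(d+\delta)$, using the per-epoch cost bounds stated just before the lemma. The only cosmetic difference is that you verify the four epoch types separately, while the paper compresses them into the single statement that each epoch's increment is at most $3(d'+\delta'-d-\delta)+11(d'-d)$.
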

\begin{proof}
    The proof works by induction. The result is clear at initialisation, with an active sub-tree of the form $(0,0,0)$. Then consider an epoch ending with a tree of the form $(d',\delta',\delta')$ and write $(d,\delta, \delta)$ the form of the active sub-tree when the epoch started, and for which we assume that the property is true. Observe in all of the cases above, the cost incurred during the epoch is bounded by $3(d'+\delta'-d-\delta) + 11(d'-d)$. Thus since the cost at the start of the epoch is bounded by $11d+3(d+\delta)$, it is bounded by $11d'+3(d'+\delta')$ at the end of the epoch.
\end{proof}
\begin{proposition}\label{prop:keq3}
The strategy $s_p$ defined above satisfies $f_{s_p}(3,D)\leq 14D$.
\end{proposition}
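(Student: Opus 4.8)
The plan is to establish a stronger, round‑by‑round bound and then read off the proposition. Concretely, I will show that for every round $\tau$ of every play (against any adversary),
$\cost(\tau)\le 14\,m(\tau)$,
where $m(\tau)$ denotes the depth of the shallowest active leaf. Since each active leaf carries at least one miner, $m(\tau)$ is also the depth of the shallowest of the three miners, so the event ``some miner is at depth $\le D$'' at round $\tau$ is exactly $m(\tau)\le D$. Granting the bound, the proposition is immediate: $f_{s_p}(3,D)$ is by definition the largest value of $\cost(\tau)$ over rounds with $m(\tau)\le D$, and each such value is at most $14D$.

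I would prove the per‑round bound by running over the epochs of the play. Fix an adversary and consider an epoch that begins at a board of the form $(d,\delta,\delta)$ with $\delta\ge 1$; this board is where the previous epoch ended, so Lemma~\ref{lemma:keq3} applies to it and the cost there is at most $11d+3(d+\delta)$. Two facts about the strategy carry the argument. First, throughout this epoch the lowest common ancestor of the two active leaves stays at depth $d$ and the quantity $\min(\delta_1,\delta_2)$ is non‑decreasing, since the only moves performed are $\delta_2\mapsto\delta_2+1$ and the swap $(\delta_1,\delta_2)\mapsto(\delta_2+1,\delta_1)$; hence $m(\tau)\ge d+\delta$ at every round of the epoch. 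Second, every move occurring inside such an epoch has non‑negative cost: a both‑miners‑descend move costs $2-2\cdot 1=0$, a swap costs $1+(\delta_1+\delta_2)-2\cdot 1=\delta_1+\delta_2-1\ge 0$, and an epoch‑ending move that empties a leaf costs $\delta_1+\delta_2$ or $2(\delta_1+\delta_2)$, each $\ge 0$. So the cost is non‑decreasing over the epoch, and, by the accounting already recorded in the text (an increase of at most $3\delta$ in the first case and at most $11\delta$ in the other), the cost at any round $\tau$ of the epoch is at most $\big(11d+3(d+\delta)\big)+11\delta=14(d+\delta)\le 14\,m(\tau)$, the last inequality by the first fact.

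It remains to dispose of the degenerate one‑round epochs, which go from a board $(d,0,0)$ to $(d+1,0,0)$ or to $(d,1,1)$. At $(d,0,0)$, Lemma~\ref{lemma:keq3} gives $\cost\le 14d=14\,m$; at the successor board one has $m=d+1$ and the cost has changed by at most $1$, so $\cost\le 14d+1\le 14(d+1)=14\,m$. Every round of the play lies in some epoch, so the per‑round bound holds everywhere, and the proposition follows.

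The step I expect to be most delicate is the first structural fact — that throughout an epoch the common‑ancestor depth is frozen and $\min(\delta_1,\delta_2)$ never drops below its initial value $\delta$ — which requires unwinding the doubling dynamics of the strategy carefully, in particular tracking how the swap move relabels the two branches and how the one‑round epochs interface with the rest, so as to be sure the shallowest miner never climbs above depth $d+\delta$ and that the per‑epoch cost increments really match the $3\delta$ and $11\delta$ figures quoted in the prose. Once that and the elementary non‑negativity of the in‑epoch move costs are in place, the conclusion is just the two‑line estimate above combined with Lemma~\ref{lemma:keq3}, plus the trivial degenerate‑epoch check.
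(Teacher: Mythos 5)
Your proposal is correct, but it reaches the bound by a different mechanism than the paper. The paper's own proof is a short ``virtual completion'' argument: at any board where some miner sits at depth $D'\le D$, one imagines that the adversary immediately kills the other active leaf (with zero children), which ends the current epoch with an active sub-tree $(D',0,0)$; since that imagined move carries no discount and only non-negative transport cost, Lemma~\ref{lemma:keq3} applied at this virtual epoch end bounds the present cost by $11D'+3D'=14D'\le 14D$. You instead strengthen the statement to a per-round invariant $\cost(\tau)\le 14\,m(\tau)$ and prove it by explicit in-epoch bookkeeping: the common-ancestor depth is frozen at $d$, $\min(\delta_1,\delta_2)$ never drops below $\delta$, the only in-epoch moves are descents (cost $0$), at most one swap (cost $\le 3\delta$), and an epoch-ending join (cost $\le 8\delta$), giving $\cost\le \bigl(11d+3(d+\delta)\bigr)+11\delta=14(d+\delta)\le 14\,m(\tau)$, plus the trivial one-round epochs from $(d,0,0)$. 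Both arguments rest on Lemma~\ref{lemma:keq3}; the paper's buys brevity and avoids re-examining the epoch dynamics (only the implicit fact that the imagined join cannot decrease the cost is needed), while yours buys a slightly sharper pointwise bound and makes the cost-monotonicity explicit, at the price of the structural verification you yourself flag as delicate -- that verification is routine (it is exactly the case analysis of the three possible adversary moves on a two-leaf board) and is no less rigorous than the paper's own informal epoch accounting, so the deferral is not a gap.
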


\begin{proof}
We consider a board in which some miner is at depth $D'\leq D$. Note the board may well be in the middle of an epoch, so we can't immediately apply the preceding lemma. Instead we imagine that the adversary kills the other node, thus finishing the epoch with an active sub-tree of the form $(D',0,0)$, to which we can apply Lemma \ref{lemma:keq3} to obtain a cost of at most $3D+11D = 14D$.
\end{proof}
Note that in light of Theorem \ref{th:mainreduction}, this result induces the first $\frac{2n}{3}+O(D)$ exploration algorithm with $k=3$ robots, thereby partially answering a previous open question of \cite{higashikawa2014online}. In the sequel, we generalize the idea of looking at the shape of the active sub-tree by introducing the following definition.

\begin{Definition}
For integers $d\leq D$, the board $\Tcal$ has a $(D,d)$-structure if it satisfies the following conditions,
\begin{itemize}
    \item[--] the highest active leaf is at depth $D$,
    \item[--] the lowest common ancestor of all active leaves is at depth $d$.
\end{itemize}
\end{Definition}
\subsection{Tree-mining with more than three miners}\label{sec:generaltm} 
We fix some $k\geq 4$. We will consider a slight variant of the tree-mining game, where there is initially a single miner in the mine and at all rounds the adversary can add new miners to the root of the mine,  letting the player assign them to any active leaf. The adversary can add at most $k-1$ new miners, so the total number of miners remains bounded by $k$. This modification of the game does not change its analysis because it is always clearly in the adversary's best interest to add all $k$ miners as soon as the game starts. This formulation is helpful to describe a recursive strategy for the player. This variant of the game and the recursive strategy are formally defined in Section \ref{sec:details}. We provide here the key elements of proof.  


As before, the strategy that we denote by $s_p\in \Scal_p$ is composed of epochs. An epoch consists in a sequence of rounds that will have the board go from a $(D,d)$-structure, with integers $d<D$, to some $(D',d')$-structure with integers $D'$ and $d'$ satisfying one of the two following conditions, 
\begin{align}
    D+(D-d)&\leq D',\label{eq:split}\\
    d+(D-d)&\leq d'.\label{eq:join}
\end{align}
The case where the epoch starts with $D=d$ is simple and it is treated separately. 

We now describe the course of one epoch starting from a $(D,d)$-structure.  We assume by induction that a tree-mining strategy capable of handling up to $k-1$  miners was defined. 
\begin{itemize}
    \item[--] \textbf{if} $d < D$ at the start of the epoch \textbf{then} the miners are partitioned into independent groups that correspond to the leaf on which they are located at the start of the epoch. The epoch will end as soon as one of these two conditions is met  \eqref{eq:split} all groups have attained depth $D+(D-d)$ (thus, $D+(D-d)\leq D'$) ; \eqref{eq:join} all robots belong to the same group (thus, $d+(D-d)\leq d'$) ;
    \begin{itemize}
    \item[--] \textbf{for each group}, the player starts an independent recursive instance of the strategy $s_p$, rooted at the leaf where the team is formed -- therefore involving at most $k-1$ miners -- to respond to the adversary's moves. When all miners of some group have attained depth $D+(D-d)$, the corresponding instance becomes \textit{finished}. All the actions that the adversary will address to this group shall have the player respond by reassigning the associated miners to other \textit{unfinished} groups with the smallest number of miners.
    \item[--] \textbf{if} some miner is added to the mine \textbf{then} it is added to the instance of the \textit{unfinished} group with the smallest number of miners.
    \end{itemize}
    \item[--] \textbf{else if} $D=d$ at the start of the epoch \textbf{then} all miners are on the same leaf. The adversary kills this leaf with $c\leq k-1$ children, and the player partitions the miners evenly among them. This leads to a $(D+1,D+1)$-structure, or alternatively to a $(D+1,D)$-structure. The epoch ends after one round.
\end{itemize}
We now analyze the cost of an epoch. While there are at least $2$ \textit{unfinished} groups, the maximum size of a group is $\ceil{k/2}$. During this first phase of the epoch, there are at most $k$ such groups, thus the total cost incurred during this phase is bounded by $kf_{s_p}(\ceil{k/2},D-d)+10k^2(D-d)$, where the extra $10k^2(D-d)$ is an upper bound on the cost of all displacements of the miners between groups 
(the precise decomposition of all such moves is detailed in Section \ref{sec:details}, in particular it is necessary to consider the case where some group starts below depth $D+(D-d)$). During the second phase of the epoch, there is only one \textit{unfinished} subgroup and the corresponding instance uses at most $k-1$ miners. Thus, the cost incurred during this period is bounded by $f_{s_p}(k-1,D'-D)$. This is summarized in the table below. 
\begin{equation*}
\begin{array}{||ccccc||}
\hline
\text{Start} &&\text{Stop} & \text{s.t.} & \text{Max. Cost}\\ 
\hline
     (D,d) & \rightarrow &(D',d') & \eqref{eq:split} \text{ or } \eqref{eq:join}& f_{s_p}(k-1,D'-D)+kf_{s_p}(\ceil{k/2},D-d)+10k^2(D-d)\\
     (D,D) &\rightarrow &(D+1,d') & {d'\in\{D,D+1\}}&  k\\
     \hline
\end{array}
\end{equation*}
Notice that if one of conditions \eqref{eq:split} or \eqref{eq:join} is satisfied, we have $D-d \leq \max\{D'-D,d'-d\}\leq (D'-D)+(d'-d)$. We can thus bound the cost of an epoch that goes from a $(D,d)$-structure to a $(D',d')$-structure by,
\begin{equation}\label{eq:dynamic}
    f_{s_p}(k-1,D'-D)+k\left[f_{s_p}(\ceil{k/2},D'-D)+f_{s_p}(\ceil{k/2},d'-d)\right]+10k^2\left[(D'-D)+(d'-d)\right],
\end{equation}
which leads to the following result.
\begin{theorem}\label{th:mainth}
    The recursive strategy $s_p\in \Scal_p$ defined above satisfies for any $k\geq 2$ and $D\in \Nbb$, 
    \begin{equation}\label{eq:main_eq}
        f_{s_p}(k,D) \leq c_k D,
    \end{equation}
    where $(c_k)_{k\geq 2}$ is defined by
    $
    \begin{cases}
        c_2=2,\\
        c_k = c_{k-1}+2kc_{\ceil{k/2}}+20k^2,
    \end{cases}$ and satisfies
    $c_k = \mathcal{O}(k^{\log_2 k}).$
\end{theorem}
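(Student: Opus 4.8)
The plan is to prove the two assertions of Theorem~\ref{th:mainth} in turn: the bound $f_{s_p}(k,D)\le c_kD$ by strong induction on $k$ using the epoch analysis that precedes the statement, and then the asymptotics $c_k=\Ocal(k^{\log_2k})$ by unwinding the recurrence.

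For the bound, I would take $k=2$ and $k=3$ as base cases, handled by Propositions~\ref{prop: kequals2} and \ref{prop:keq3}: one has $f_{s_p}(2,D)=0\le 2D=c_2D$, and $f_{s_p}(3,D)\le 14D\le c_3D$ (one computes $c_3=194$). For $k\ge 4$ I assume $f_{s_p}(j,D')\le c_jD'$ for every $j<k$ and every $D'$, and look at the decomposition of the play into epochs. Feeding the induction hypothesis into the per-epoch estimate \eqref{eq:dynamic} — and using that $(c_j)_{j\ge 2}$ is increasing, so that a group of any size $\le\ceil{k/2}$ is dominated by $c_{\ceil{k/2}}$ and the last unfinished group of size $\le k-1$ by $c_{k-1}$ — an epoch that carries the board from a $(D,d)$-structure to a $(D',d')$-structure costs at most $A\,(D'-D)+B\,(d'-d)$ with $A=c_{k-1}+kc_{\ceil{k/2}}+10k^2$ and $B=kc_{\ceil{k/2}}+10k^2$, where crucially $A+B=c_k$. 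The exceptional single-round epochs $(D,D)\to(D+1,d')$ cost at most $k\le A$ and therefore obey the same bound. Since across epochs the quantities $D$ and $d$ are non-decreasing and start at $0$, summing telescopes: after any number of epochs the cost is at most $A\,D_{\mathrm{cur}}+B\,d_{\mathrm{cur}}\le(A+B)D_{\mathrm{cur}}=c_kD_{\mathrm{cur}}$. As in the proof of Proposition~\ref{prop:keq3}, if the moment at which some miner has depth $\le D$ lies inside an epoch, one lets the adversary terminate that epoch at a structure of depth $\le D$, so that the displayed invariant gives $\cost\le c_kD$; hence $f_{s_p}(k,D)\le c_kD$.

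For the asymptotics, I would prove $c_k\le C\,k^{\log_2k}$ by strong induction, with $C$ a sufficiently large absolute constant. Writing $g(x)=x^{\log_2x}=e^{(\ln x)^2/\ln 2}$, so that $g'(x)=g(x)\cdot\frac{2\ln x}{x\ln 2}$, the inductive step amounts to $C\bigl[g(k)-g(k-1)-2k\,g(\ceil{k/2})\bigr]\ge 20k^2$. Here the mean value theorem gives $g(k)-g(k-1)=\Theta\!\bigl(\tfrac{\ln k}{k}g(k)\bigr)$, whereas $\ceil{k/2}\le(k+1)/2$ yields $g(\ceil{k/2})=\Theta\!\bigl(g(k)/k^{2}\bigr)$ and hence $2k\,g(\ceil{k/2})=\Theta\!\bigl(g(k)/k\bigr)$; so the recursive term is smaller than the decrement by a factor $\Theta(\ln k)$, and for $k$ beyond an absolute threshold the bracket is at least $g(k)/k=k^{\log_2k-1}$, which eventually exceeds $20k^2$. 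Choosing $C$ large enough to also satisfy the finitely many inequalities for $k$ below that threshold closes the induction and gives $c_k=\Ocal(k^{\log_2k})$; the ceilings are absorbed with crude estimates such as $g(k-1)\ge g(k)/2$ and $g(k+1)\le 2g(k)$ for $k$ large.

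The hard part will be the analysis of the recurrence. The subtlety is that the ``carry-over'' term $c_{k-1}$ is already within a $1+o(1)$ factor of the target $Ck^{\log_2k}$, so a naive induction overshoots by polynomial factors; one really has to exploit that the gap $g(k)-g(k-1)$ dominates the recursive contribution $2kc_{\ceil{k/2}}$ by a logarithmic factor in order to have room for the additive $20k^2$. A lesser but real point is making the ``let the adversary end the epoch'' step fully rigorous when several active leaves sit at different depths; this follows the pattern already established for $k=3$ and belongs with the formal description of the recursive strategy in Appendix~\ref{sec:details}.
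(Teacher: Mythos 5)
Your proposal is correct and takes essentially the same route as the paper: the same epoch-wise induction with the invariant $a_kD+b_kd$ (your $A,B$ are exactly the paper's $a_k,b_k$ with $a_k+b_k=c_k$), the same ``adversary finishes the epoch'' step, and the same super-solution analysis of the recurrence, showing that $g(k)=k^{\log_2 k}$ absorbs $c_{k-1}+2kc_{\lceil k/2\rceil}+20k^2$ for large $k$ because $g(k)-g(k-1)\approx \tfrac{2\ln k}{k\ln 2}\,g(k)$ exceeds $2k\,g(\lceil k/2\rceil)\approx 4g(k)/k$ by a $\Theta(\ln k)$ factor, with a large constant covering the finitely many small $k$. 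The only cosmetic deviation is your use of Proposition~\ref{prop:keq3} as a $k=3$ base case (that proposition concerns the separate doubling strategy, while the paper's induction covers $k\geq 3$ directly from the epoch bound), which does not affect the argument.
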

We prove equation \eqref{eq:main_eq} for all values of $k\geq 2$ by induction. The initialization with $k=2$ is a direct consequence of Proposition~\ref{prop: kequals2} above. We fix $k\geq 3$ and assume the equation holds for all values of $k'\in \{2,\dots, k-1\}$.
\begin{lemma}\label{lemma:lemma:generalk} If an epoch ends with a $(D,d)$-structure, the cost is at most $a_k D+b_k d$, with $a_k = c_{k-1}+kc_{\lceil k/2\rceil}+10k^2$ and $b_k = kc_{\lceil k/2\rceil}+10k^2$.
    \end{lemma}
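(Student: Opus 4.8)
The plan is to establish Lemma~\ref{lemma:lemma:generalk} by an inner induction on epochs, carried out inside the outer induction on $k$ that has already been set up for Theorem~\ref{th:mainth}; so throughout I may freely use $f_{s_p}(k',D')\le c_{k'}D'$ for all $k'\in\{2,\dots,k-1\}$ and all $D'$. Since the epochs partition the timeline of the game, it suffices to show that whenever an epoch ends with a $(D,d)$-structure the accumulated cost is at most $a_kD+b_kd$. The base case is the initial board, a single leaf holding all $k$ miners, which is a $(0,0)$-structure with cost $0=a_k\cdot 0+b_k\cdot 0$.

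For the inductive step, take an epoch running from a $(D,d)$-structure to a $(D',d')$-structure, and assume the cost at its start is at most $a_kD+b_kd$ (this is either the base case or the conclusion of the previous epoch). If $d<D$, then by construction the epoch terminates only once \eqref{eq:split} or \eqref{eq:join} holds, which, as noted just before \eqref{eq:dynamic}, gives $D'\ge D$, $d'\ge d$ and $D-d\le (D'-D)+(d'-d)$, so that \eqref{eq:dynamic} legitimately bounds the cost incurred during the epoch by $f_{s_p}(k-1,D'-D)+k\left[f_{s_p}(\ceil{k/2},D'-D)+f_{s_p}(\ceil{k/2},d'-d)\right]+10k^2\left[(D'-D)+(d'-d)\right]$. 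Plugging in the outer induction hypothesis $f_{s_p}(k-1,\cdot)\le c_{k-1}(\cdot)$ and $f_{s_p}(\ceil{k/2},\cdot)\le c_{\ceil{k/2}}(\cdot)$ and regrouping the coefficients of $(D'-D)$ and of $(d'-d)$ yields, by the very definitions $a_k=c_{k-1}+kc_{\ceil{k/2}}+10k^2$ and $b_k=kc_{\ceil{k/2}}+10k^2$, an epoch cost of at most $a_k(D'-D)+b_k(d'-d)$. Adding this to the start-of-epoch bound $a_kD+b_kd$ gives exactly $a_kD'+b_kd'$.

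If instead $d=D$, the epoch is the single round in which the adversary splits the common leaf into $c\le k-1$ children and the player distributes the miners evenly; the board becomes a $(D+1,d')$-structure with $d'\in\{D,D+1\}$ and the round costs at most $k$. Since $a_k\ge 10k^2\ge k$ and $b_k\ge 0$, this cost is at most $a_k=a_k(D'-D)\le a_k(D'-D)+b_k(d'-d)$, so the end-of-epoch cost is at most $(a_kD+b_kD)+k\le a_kD'+b_kd'$, using $a_kD+a_k=a_kD'$ and $b_kD+b_k(d'-D)=b_kd'$. This completes the induction. (A useful consistency check is that $a_k+b_k=c_{k-1}+2kc_{\ceil{k/2}}+20k^2=c_k$, which is precisely what will later let one deduce \eqref{eq:main_eq}, i.e. $f_{s_p}(k,D)\le c_kD$, from this lemma — in the same way Proposition~\ref{prop:keq3} was deduced from Lemma~\ref{lemma:keq3} by imagining the adversary kills all but one active leaf to close an in-progress epoch at a $(D',D')$-structure.)

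The computation above is routine; the one genuinely delicate ingredient is the single-epoch bound \eqref{eq:dynamic} itself, together with the facts that $D$ and $d$ do not decrease within an epoch, that an epoch does end as soon as \eqref{eq:split} or \eqref{eq:join} is met, and that $10k^2(D-d)$ really upper-bounds the total transport cost of all reassignments of miners from finished groups (including the subtle case where a group begins strictly below depth $D+(D-d)$). All of these are handled by the explicit description of the variant game and the recursive strategy in Section~\ref{sec:details}, so in the proof of this lemma I would invoke \eqref{eq:dynamic} as a black box and perform only the short coefficient-matching argument sketched here.
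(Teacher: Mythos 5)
Your proof is correct and follows essentially the same route as the paper: an induction over epochs (nested inside the outer induction on $k$), bounding each epoch's cost by \eqref{eq:dynamic} combined with the hypothesis $f_{s_p}(k',\cdot)\le c_{k'}(\cdot)$ for $k'<k$, and matching coefficients to get $a_k(D'-D)+b_k(d'-d)$. Your explicit handling of the base case and of the degenerate $(D,D)\to(D+1,d')$ epoch is a bit more detailed than the paper's one-line argument, but it is the same proof.
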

\begin{proof}    
    We consider some $(D',d')$-structure attained at the end of an epoch, and we reason by induction over epochs. The epoch must have started from a $(D,d)$-structure, for which that the cost was below 
    $a_kD+b_kd$ by induction.
    By equation~\eqref{eq:dynamic} and equation \eqref{eq:main_eq}; we have that the cost of an epoch going from a $(D,d)$-structure to a $(D',d')$-structure is at most
    $a_k(D'-D)+b_k(d'-d),$
    thus completing the induction. 
    \end{proof}

\begin{lemma}
    Under the strategy $s_p$, for any depth $D\in \Nbb$, we have that $f_{s_p}(k,D) \leq c_k D$, where $c_k = a_k+b_k = c_{k-1}+2kc_{\ceil{k/2}}+20k^2$.
\end{lemma}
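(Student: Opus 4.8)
The plan is to deduce the bound $f_{s_p}(k,D)\le c_kD$ from Lemma~\ref{lemma:lemma:generalk} exactly as Proposition~\ref{prop:keq3} was deduced from Lemma~\ref{lemma:keq3} in the three‑miner case. Lemma~\ref{lemma:lemma:generalk} already shows that at the end of every epoch whose terminal board is a $(D,d)$-structure the accumulated cost is at most $a_kD+b_kd$, and since $d\le D$ always, this is at most $(a_k+b_k)D$. The only thing missing is that $f_{s_p}(k,D)$ refers to the cost at an arbitrary round at which some miner is still at depth $\le D$, and such a round may lie strictly inside an epoch rather than at a boundary.

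To handle that, I would fix an adversary strategy $s_a$ and a round $\tau$ at which the board has a miner at depth $D'\le D$ — equivalently, an active leaf $\ell^\star$ at depth $D'\le D$, since every active leaf carries at least one miner — and bound $\cost(\tau)$. I would have the adversary, from round $\tau$ onward, deactivate each active leaf other than $\ell^\star$ in turn, each time granting it $0$ children (always legal, since each leaf has at least one miner). After finitely many such rounds only $\ell^\star$ remains active, so $s_p$ is forced to relocate all $k$ miners onto $\ell^\star$, reaching a $(D',D')$-structure at an epoch boundary. Each move of this forced phase has $c(\cdot)=0$, so $\cost(t+1)=\cost(t)+d(\bx(t),\bx(t+1))\ge\cost(t)$; hence $\cost(\tau)$ is at most the cost at that $(D',D')$-structure, which by Lemma~\ref{lemma:lemma:generalk} is at most $a_kD'+b_kD'=(a_k+b_k)D'\le(a_k+b_k)D$. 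Maximizing over $\tau$ and over $s_a$ yields $f_{s_p}(k,D)\le(a_k+b_k)D$, and the proof closes with the identity $a_k+b_k=(c_{k-1}+kc_{\lceil k/2\rceil}+10k^2)+(kc_{\lceil k/2\rceil}+10k^2)=c_{k-1}+2kc_{\lceil k/2\rceil}+20k^2=c_k$ from Theorem~\ref{th:mainth}.

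The step I expect to require the most care is the assertion that the forced ``gather all miners on $\ell^\star$'' genuinely terminates an epoch at a $(D',d')$-structure with $d'\le D'\le D$: this has to be checked against the precise recursive definition of $s_p$ and of an epoch given in Section~\ref{sec:details}, since a priori the strategy could be part‑way through an unfinished recursive sub‑instance, or the only way to drive it to an epoch boundary might involve gathering the miners on a leaf deeper than $D$. Should the latter occur, the fix would be to show that $\cost(\tau)$ is already dominated by the cost at an earlier epoch boundary whose structure has total depth at most $D$, together with the portion of the current epoch confined to depths $\le D$. Once that bookkeeping is in place, the legality of the $0$-child moves, the monotonicity of $\cost$ over the forced phase, and the closing arithmetic are all routine.
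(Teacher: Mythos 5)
Your proposal is correct and follows essentially the same route as the paper: the paper likewise bounds the mid-epoch cost by imagining the adversary kills every active leaf except the one at depth $D'\leq D$, forcing the epoch to end in a $(D',D')$-structure to which Lemma~\ref{lemma:lemma:generalk} applies, giving cost at most $(a_k+b_k)D = c_kD$. Your extra bookkeeping (legality of the $0$-child moves, monotonicity of the cost during the forced gathering) only makes explicit what the paper leaves implicit.
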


\begin{proof}
    We consider the cost at a point of the game where some miner is at depth $D'\leq D$. We assume that the adversary plays by killing all leaves, except for that specific leaf. By Lemma \ref{lemma:lemma:generalk}, the epoch therefore ends with a $(D',D')$-structure corresponding to a cost below $(a_k+b_k)D.$
\end{proof}
This finishes to prove the recurrence relation for $(c_k)_{k\geq 2}$. The analysis of its asymptotic behaviour is given by the following lemma, which is proved in Section \ref{sec: calculus}. 
\begin{lemma}[see Lemma \ref{lemma:asymptotics}]
    The sequence $(c_k)_{k\geq 2}$ defined above satisfies
    $c_k = \mathcal{O}(k^{\log_2 k}).$
\end{lemma}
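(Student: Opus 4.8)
The plan is to prove the stronger quantitative statement that $c_k \le A\,\phi(k)$ for all $k\ge 2$, where $\phi(k):=k^{\log_2 k}=2^{(\log_2 k)^2}$ and $A$ is an absolute constant to be pinned down at the very end; since $c_k=\mathcal O(k^{\log_2 k})$ is exactly $c_k=\mathcal O(\phi(k))$, this suffices. I would argue by strong induction on $k$. Because $\phi(k)\ge 1$, any finite collection of base cases $2\le k\le k_0$ can be absorbed by taking $A$ large enough, so the real content is the induction step. Fixing $k>k_0$ and assuming $c_j\le A\,\phi(j)$ for $2\le j<k$, I would feed the recurrence $c_k=c_{k-1}+2k\,c_{\ceil{k/2}}+20k^2$ and the inductive hypothesis into the target bound; after cancelling $A$ this reduces the entire lemma to the single analytic inequality
\[ \phi(k)-\phi(k-1)\ \ge\ 2k\,\phi(\ceil{k/2})+\frac{20k^2}{A}, \]
required to hold for all $k$ beyond some absolute threshold.

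The remaining work is two elementary estimates on $\phi$. First I would check that the ``halving'' term on the right is genuinely lower order: since $\phi$ is increasing and $\ceil{k/2}\le (k+1)/2$, one gets $\log_2\ceil{k/2}\le \log_2 k-1+\frac1{k\ln 2}$, and squaring this and discarding lower-order contributions yields $(\log_2\ceil{k/2})^2\le(\log_2 k)^2-2\log_2 k+3$ for $k$ large, i.e. $\phi(\ceil{k/2})\le 8\,\phi(k)/k^2$, hence $2k\,\phi(\ceil{k/2})\le 16\,\phi(k)/k$. Second — the estimate doing the real work — I would lower-bound the first difference $\phi(k)-\phi(k-1)$. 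Writing $(\log_2 k)^2-(\log_2(k-1))^2=(\log_2 k+\log_2(k-1))\log_2\frac{k}{k-1}$ and using $\log_2\frac{k}{k-1}\ge\frac1{k\ln 2}$ together with $\log_2 k+\log_2(k-1)\ge\log_2 k$ gives $(\log_2 k)^2-(\log_2(k-1))^2\ge\frac{\log_2 k}{k\ln 2}$; combining this with the matching upper bound $\le\frac{2\log_2 k}{(k-1)\ln 2}\le 1$ and the inequalities $1+t\ln 2\le 2^t\le 2$ on $0\le t\le 1$ produces $1+\frac{\log_2 k}{k}\le\phi(k)/\phi(k-1)\le 2$, and therefore $\phi(k)-\phi(k-1)\ge\frac{\log_2 k}{k}\phi(k-1)\ge\frac{\log_2 k}{2k}\phi(k)$.

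To finish, I would combine the two estimates: the analytic inequality holds as soon as $\frac{\log_2 k}{2k}\phi(k)\ge\frac{16}{k}\phi(k)+\frac{20k^2}{A}$, and here the first right-hand term is beaten by the left once $\log_2 k$ exceeds an absolute constant, while $20k^2/A\le\frac1k\phi(k)$ for all large $k$ because $\phi(k)/k^3=2^{(\log_2 k)^2-3\log_2 k}\to\infty$. This fixes an absolute threshold $k_0$ for which the induction step is valid at every $k>k_0$; I would then set $A:=\max_{2\le k\le k_0}c_k/\phi(k)$ (a finite number) to dispatch the base cases, completing the induction.

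I do not expect a genuine conceptual obstacle — the whole thing is bookkeeping — but the one step that carries the argument is the lower bound on $\phi(k)-\phi(k-1)$ in the second estimate: the recursion closes only because successive values of $\phi$ grow by a factor $1+\Theta((\log k)/k)$, which outpaces by a logarithmic factor the $1+\Theta(1/k)$ perturbation from the $2k\,c_{\ceil{k/2}}$ term, leaving just enough slack for the additive $20k^2$. One could alternatively telescope, $c_k=c_2+\sum_{j=3}^{k}(2j\,c_{\ceil{j/2}}+20j^2)$, regroup the sum according to the common value $\ceil{j/2}$, and bound $\sum_{i\le\ceil{k/2}}i\,\phi(i)$ by a constant times its last term; this route needs the same logarithmic estimates and reaches the same conclusion.
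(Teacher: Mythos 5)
Your proposal is correct and follows essentially the same route as the paper: the paper's Lemma~\ref{lemma: recursion} establishes precisely your analytic inequality (that $u_k = k^{\log_2 k}$ dominates $u_{k-1}+2k\,u_{\ceil{k/2}}+20k^2$ beyond an absolute threshold $k_0$), and then absorbs the finitely many base cases into the constant before inducting, exactly as you do. The only difference is cosmetic — you bound the first difference $\phi(k)-\phi(k-1)$ and $\phi(\ceil{k/2})$ directly, while the paper bounds the ratios $u_{k+1}/u_k$ and $u_{\ceil{(k+1)/2}}/u_k$ — so no further comparison is needed.
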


\section{From Collective Tree Exploration to Tree-Mining}\label{sec: red}
In this section, we establish the connection between collective tree exploration (\cte) and tree-mining game (\tm). For this purpose, we will introduce an intermediary setting that we call \textit{asynchronous collective exploration} (\acte) generalizing collective tree exploration and extending the adversarial setting of \cite{cosson2023efficient}. In Section \ref{sec:red1}, we present a reduction from (\cte) to (\acte) and in Section~\ref{sec:newred} we present the reduction from (\acte) to (\tm). A reciprocal reduction from (\tm) to (\acte) is finally presented in Section \ref{sec:reciprocal}. 
This section strongly relies on the notion of locally-greedy exploration algorithms introduced in Section \ref{sec:locallyg}. The main result of this section is the following.
\begin{theorem}\label{th:mainreduction} Any $f(k,D)$-bounded strategy for the tree-mining game induces induces a collective tree exploration algorithm $\Acal$ satisfying on any tree $T$ with $n$ nodes and depth $D$,  
$${\normalfont\texttt{Runtime}}(\Acal,k,T) \leq \left\lceil\frac{2n+f(k,D)}{k}\right\rceil +D.$$
\end{theorem}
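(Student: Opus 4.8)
The plan is to build the algorithm $\Acal$ by running a locally-greedy collective tree exploration procedure on $T$ and showing that its execution is ``simulated'' by a play of the tree-mining game in which the player uses the given $f(k,D)$-bounded strategy against an adversary whose moves are dictated by the structure of $T$. First I would make precise the correspondence: each time a robot in the CTE instance arrives at a node with some unexplored incident edges, this corresponds to the adversary in the tree-mining game deactivating the active leaf where that robot's ``miner'' sits and endowing it with a number of children equal to the number of unexplored edges discovered there (capped appropriately so that at least one miner goes down each new branch, which is exactly the locally-greedy constraint $c(\tau)\le x_{\ell(\tau)}(\tau)-1$). The player's response — which new active leaves to send the remaining miners to — is translated back into the CTE instance as the choice of which explored path each free robot walks along. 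Because the strategy is locally-greedy, the two processes stay in lockstep: the tree built by the adversary in the game is exactly the DFS-like ``exploration tree'' traced out in $T$, and the depth of miners corresponds to the depth at which robots are working.

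The key accounting step is then to bound $\texttt{Runtime}(\Acal,k,T)$ in terms of the game's cost. The cost of the tree-mining game counts the total number of miner moves, with a discount of $2$ per edge created by the adversary; in the CTE translation, an edge created by the adversary corresponds to an edge of $T$ that gets traversed (twice, down and back, over the course of exploration), so the total number of robot-steps used for exploration is at most $2(n-1) + \cost$, where $\cost$ is the final game cost once every miner (equivalently, every branch of $T$) has been exhausted. Since each miner reaches depth at most $D$, every ``column'' of the game stays within depth $D$, so the relevant cost is bounded by $f(k,D)$. Dividing the total robot-work by $k$ (the $k$ robots move in parallel, so rounds are robot-steps divided by $k$, with a ceiling to handle the integer part) gives $\lceil (2n + f(k,D))/k\rceil$ rounds of productive work; the extra additive $D$ accounts for the final walk back to the root (or, symmetrically, for start-up imbalance where not all $k$ robots can be kept busy until the exploration tree is deep enough). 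I would assemble these into
\[
\texttt{Runtime}(\Acal,k,T)\;\le\;\left\lceil\frac{2n + f(k,D)}{k}\right\rceil + D.
\]

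The main obstacle I anticipate is making the lockstep simulation genuinely tight rather than merely heuristic: in the synchronous CTE model all $k$ robots move every round, whereas the tree-mining game processes one adversary move (one node expansion) at a time, so I must argue that batching the sequential game moves into synchronous rounds does not lose more than the claimed $\lceil\cdot\rceil + D$ slack. This is where the intermediate \acte{} setting from Section~\ref{sec: red} does the real work: one first shows (Section~\ref{sec:red1}) that a synchronous CTE algorithm can be extracted from any \acte{} algorithm at an additive cost of roughly $D$ rounds, and then (Section~\ref{sec:newred}) that an \acte{} execution is exactly a play of the tree-mining game, so that transport distance in the game equals robot-travel in \acte{}. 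The subtle points will be handling nodes where $e \le x$ (the locally-greedy node is fully ``consumed'' and produces no new active leaf, matching $\max\{0,e-x\}=0$), ensuring the ``at least one miner per active leaf'' invariant is maintained when robots would otherwise all leave a node, and verifying that the discount of $2$ per created edge precisely cancels the $2(n-1)$ term so that only the genuine overhead $f(k,D)$ survives. Once those bookkeeping invariants are nailed down, the inequality follows by summing the per-round contributions and invoking the $f(k,D)$-boundedness of $s_p$.
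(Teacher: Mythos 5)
Your overall architecture is the same as the paper's: a round-robin emulation turning any asynchronous (ACTE) execution with $M$ total moves into a synchronous algorithm running in $\lceil M/k\rceil + D$ rounds (the $+D$ paying for the return to the root), combined with the claim that a tree-mining strategy drives a locally-greedy ACTE algorithm with targets so that $M \le 2n + f(k,D)$, the game cost accounting for everything beyond the $2n$ term. That skeleton, and your division-by-$k$-plus-$D$ assembly, matches Theorems in Sections~\ref{sec:red1} and~\ref{sec:newred}.

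However, the simulation at the heart of the second step is mis-specified in a way that would fail as written. You trigger an adversary move ``each time a robot arrives at a node with some unexplored incident edges,'' with the number of children equal to the number of unexplored edges, ``capped'' so that $c \le x_{\ell}-1$. First, in a locally-greedy algorithm a robot arriving at a node with an unexplored incident edge simply takes it; no retargeting decision, hence no game move, is needed then. Second, and more seriously, capping $c$ while permanently deactivating the leaf orphans the remaining unexplored edges at that node: they correspond to no board child and no miner, so either they are never explored or the robots later sent to them incur movement that the game cost does not track, breaking the bound $M\le 2n+f(k,D)$. The paper's construction (proof of Theorem~\ref{th:redtm}) instead defers the adversary move for a node $u$ until $u$ is \emph{mined} (no incident unexplored edges remain) and a robot targeted at $u$ sits idle at $u$; the number of children is then the number of robots currently strictly below $u$, which automatically satisfies $c\le x_u-1$ since the idle robot is at $u$, and the $-2c$ discount exactly offsets the transport cost of those $c$ robots, whose new targets are one step closer to them. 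This ``mined node / idle-at-target trigger / children $=$ robots below'' mechanism, together with the target-potential lemma showing that at least $\tfrac{1}{2}(M-S)$ edges are explored after $M$ moves (where $S$ is the accumulated game cost), is the actual content of the reduction and is the piece missing from your sketch; relatedly, your statement that the discount of $2$ per created edge ``cancels the $2(n-1)$ term'' conflates two devices, since the $2n$ arises from the potential argument (each rule-\ref{it: 1} move discovers a new edge), not from the discount.
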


\subsection{Asynchronous Collective Tree Exploration}\label{sec:red1}
In this section, we define the problem of asynchronous collective tree exploration (ACTE) which is a generalization of collective tree exploration (CTE) in which agents move sequentially. 
This setting provides a strong motivation for the aforementioned \textit{locally-greedy algorithms}. It also allows to further relate collective tree exploration to the classic framework of \textit{online} problems. 

Specifically, we assume that at each step $t$, only one robot indexed by $r_t\in [k]$ is allowed to move. We make no assumption on the sequence $r_0,r_1,\dots$ that can be chosen arbitrarily by the environment. Though robots have unlimited communication and computation capabilities, we slightly reduce the information to which they have access. 
At the beginning of move $t$, the team/swarm is only given the additional information of whether robot $r_t$ is adjacent to an unexplored edge $e$, and if so, the ability to move along that edge. Note that in contrast with collective tree exploration, the team may not know the exact number of edges that are adjacent to some previously visited vertices. We will say that a node has been \textit{mined} if all edges adjacent to this node have been explored and the information that this node is not adjacent to any other unexplored edge has been revealed to the team. The terminology is evocative of the tree-mining game for reasons that will later become apparent. Note that a node that is not mined is possibly adjacent to an unexplored edge.

Exploration starts with all robots located at the root of some unknown tree $T$ and ends only when all nodes have been discovered and mined. We do not ask that all robots return to the root at the end of exploration for this might not be possible in finite time (if say some robot breaks-down indefinitely while it is away from the root). For an asynchronous exploration algorithm $\Bcal$, we denote by $\texttt{Moves}(\Bcal,k,T)$ the maximum number of moves that are needed to traverse all edges $T$ for \textit{any} sequence $r_0,r_1, \dots$ of allowed robot moves. Asynchronous collective tree exploration (ACTE) generalises collective tree exploration (\cte) as follows,
\begin{theorem}
    For any asynchronous exploration collective tree exploration algorithm $\Bcal$, one can derive a collective tree exploration algorithm $\Acal$ satisfying,
    \begin{equation*}
    {\normalfont\texttt{Runtime}}(\Acal,k,T) \leq \left\lceil\frac{1}{k}{\normalfont\texttt{Moves}(\Bcal,k,T)}\right\rceil+D,
    \end{equation*}
    for any tree $T$ of depth $D$. 
\end{theorem}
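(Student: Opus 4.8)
The plan is to realize $\Acal$ as an internal simulator of $\Bcal$ run under a schedule that $\Acal$ is free to pick, namely the \emph{round-robin} schedule $r_t = 1 + (t \bmod k)$, and to collapse that simulation $k$ asynchronous moves at a time into a single synchronous round. Formally, call the $r$-th \emph{batch} the set of asynchronous moves $\{(r-1)k,\dots,rk-1\}$; under round-robin it contains exactly one move of each robot $i\in[k]$, namely move $(r-1)k+i-1$, and these occur in the order $i=1,\dots,k$. In its $r$-th synchronous round, $\Acal$ has robot $i$ traverse the edge that $\Bcal$ prescribes for asynchronous move $(r-1)k+i-1$. This is the reduction from \cte\ to \acte.

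First I would check that this prescription is consistent with the synchronous dynamics. Under round-robin robot $i$ is idle from its move in batch $r-1$ until its move in batch $r$, so the node from which $\Bcal$ expects robot $i$ to issue move $(r-1)k+i-1$ is precisely the node it physically occupies at the start of synchronous round $r$. Hence round $r$ of $\Acal$ consists of $k$ legal one-edge moves, one per robot, and $\Acal$ is a bona fide synchronous \cte\ algorithm.

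The main point, and the step I expect to require the most care, is that $\Acal$ can actually \emph{compute} all moves of batch $r$ from the information available to it at the start of round $r$ — the partially explored tree resulting from the first $r-1$ batches. The claim is that $\Acal$ can reconstruct, move by move, the reduced knowledge state of $\Bcal$ throughout batch $r$: the only datum $\Bcal$ is fed before a move is the Boolean ``the active robot is incident to an unexplored edge'', and $\Acal$ — which knows the exact number of unexplored edges at every discovered node, and tracks inside the simulation how many earlier robots of the same batch have consumed one from that node — can supply it correctly. The subtlety is that a robot may, during batch $r$, traverse an unexplored edge and reach a node whose subtree $\Acal$ does not yet know; but this cannot influence any later move of batch $r$, because every remaining robot issues its batch-$r$ move from its end-of-batch-$(r-1)$ node, which is already discovered, and two robots of the same batch cannot both land on the same freshly discovered node (no cycles in a tree). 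Thus $\Bcal$'s reduced knowledge at every instant of batch $r$ depends only on the edges traversed and on the new nodes regarded as abstract children with as-yet-unknown subtrees, all of which $\Acal$ maintains faithfully; the true subtrees of those new nodes are revealed to $\Acal$ exactly when the round is physically executed, in time for the simulation of round $r+1$.

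Finally I would assemble the bound. Since round-robin is an admissible schedule, the simulated run of $\Bcal$ traverses every edge of $T$ and mines every node within $M := \texttt{Moves}(\Bcal,k,T)$ asynchronous moves, hence within the first $\ceil{M/k}$ batches, so after $\ceil{M/k}$ synchronous rounds $\Acal$ has explored all of $T$. At that instant each robot sits at depth at most $D$, so $D$ further synchronous rounds suffice to bring all robots back to the root (a requirement of \cte\ but not of \acte). Therefore $\texttt{Runtime}(\Acal,k,T)\le \ceil{M/k}+D = \ceil{\texttt{Moves}(\Bcal,k,T)/k}+D$, which is the asserted inequality.
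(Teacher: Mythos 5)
Your proposal is correct and follows essentially the same route as the paper: simulate $\Bcal$ under the round-robin schedule, execute each batch of $k$ asynchronous moves as one synchronous round (checking that the team's current knowledge suffices to emulate the batch because each robot moves only once per round), and then add $D$ rounds for the return to the root. Your discussion of why freshly discovered subtrees within a batch cannot affect later moves of that batch is just a more detailed version of the paper's well-definedness argument.
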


\begin{proof}
    We assume that we are given access to some (ACTE) algorithm $\Bcal$ and we query this algorithm with the infinite sequence of robot moves $1,2, \dots, k,1,2, \dots, k, 1,2, \dots$. We now define the moves of all robots for a (\cte) algorithm $\Acal$ at a synchronous round $\tau\in \Nbb$ as follows: if the tree is not fully explored, each robot $r\in [k]$ performs the move $\tau k+r$ of algorithm $\Bcal$ ; and if the tree is fully explored then all robots head to the root. We now make the two following statements,
    \begin{enumerate}
        \item Algorithm $\Acal$ is well defined, i.e. at round $\tau$ of algorithm $\Acal$ the robots have access to enough information to emulate the running of $\Bcal$ from move $k\tau+1$ to move $k(\tau+1)$.
        \item Algorithm $\Acal$ runs in $\left\lceil\frac{1}{k}\texttt{Moves}(\Bcal,k,T)\right\rceil+D$ time steps at most.
    \end{enumerate}
    The first statement is obvious from the fact that at round $\tau$ of algorithm $\Acal$, the robots have access to the port numbers of all the edges that are currently adjacent to one of the agents and that information is sufficient because each agent moves only once from round $\tau$ to round $\tau+1$. The other statement is directly implied by the fact that after $\left\lceil\frac{1}{k}\texttt{Moves}(\Bcal,k,T)\right\rceil$ rounds all edges of the tree have been visited and the robots will thus reach the root in only $D$ additional synchronous rounds. 
\end{proof}

\subsection{Locally-greedy exploration algorithms}\label{sec:locallyg} In this section, we study the notion of locally-greedy algorithm. A locally-greedy algorithm is one in which a robot will always prefer to traverse an unexplored edge rather than a previously explored edge. Formally, we define a locally-greedy algorithm as follows,
\begin{Definition}[Locally-Greedy ACTE]
A locally-greedy algorithm for asynchronous collective tree exploration is one such that at any move $t$, if robot $r_t$ is adjacent to an unexplored edge, then it selects an unexplored edge as its next move.
\end{Definition}
Note the notion generalises to the setting of synchronous collective tree exploration (CTE). 
\begin{Definition}[Locally-Greedy CTE] A locally-greedy collective tree exploration algorithm is such that at any synchronous round, if a node is adjacent to $e$ unexplored edges and is populated by $x$ robots, then at the next round, this node will be adjacent to exactly $\max\{0,e-x\}$ unexplored edges. 
\end{Definition}
We now define the notion of \textit{anchor} of a robot in a locally-greedy algorithm. 
\begin{Definition}[Anchors of a Locally-Greedy Algorithm] For any locally-greedy algorithm, for each robot $r\in [k]$, we define the anchor $a_t(r) \in V$ as the highest node that is not mined in the path leading from robot $r$ to the root. If there is no such node, the anchor is considered empty $\perp$. 
\end{Definition}
\begin{proposition}
    In a locally-greedy algorithm, at any stage of the exploration, all undiscovered edges are below an anchor. Consequently, exploration is completed when all anchors are equal to $\perp$.
\end{proposition}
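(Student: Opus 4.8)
The plan is to prove the two assertions in turn, both by analyzing the structure of a locally-greedy algorithm at an arbitrary stage of exploration. First I would fix any stage $t$ of the run and consider an undiscovered edge $e=(u,v)$, where $u$ has been discovered but $v$ (and hence $e$) has not. Let $P$ be the unique path from $u$ to the root. I claim $u$ itself is not mined: indeed, by definition a node is mined only once \emph{all} its adjacent edges have been explored and this has been certified to the team, but $e$ is an edge adjacent to $u$ that is still unexplored, so $u$ fails this condition. Hence the set of non-mined nodes on the path from $u$ to the root is nonempty (it contains $u$), so the anchor of any robot sitting at or below $u$ is well-defined and, by the definition of the anchor as the \emph{highest} non-mined node on the path to the root, the anchor lies on $P$ weakly above $u$. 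Thus $e$ lies (weakly) below that anchor, which is what the first assertion claims.

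The one subtlety I need to handle carefully is the phrase ``below an anchor'': the anchor is attached to a robot, so I should argue that \emph{some} robot has an anchor dominating $e$, not merely that the abstract highest non-mined ancestor of $u$ exists. Here is where I would use the locally-greedy property more seriously. The natural move is to track, along the path $P$ from $u$ up to the root, the highest non-mined node $w$; I want to exhibit a robot whose root-to-position path passes through $w$ with $w$ non-mined and with every strict ancestor of $w$ on that path mined — i.e.\ a robot whose anchor is exactly $w$. Since $w$ is not mined, at least one edge adjacent to $w$ is unexplored or at least unconfirmed, and in a locally-greedy algorithm a robot at $w$ would be forced to take such an edge; one shows by induction on $t$ that the configuration of robots always ``covers'' every non-mined node in the sense that each maximal non-mined subtree contains a robot (initially trivially true at the root; a robot leaves a node only after it has been emptied of unexplored edges in the greedy sense, and the definition of ``mined'' is exactly the bookkeeping that records this). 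I expect this induction — formalizing that locally-greedy dynamics keep a robot inside every non-mined region until that region is fully mined — to be the main obstacle, and it is really the content of the proposition; the paper's definitions of ``mined'' and ``anchor'' are set up precisely so that it goes through.

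Given the first assertion, the second is immediate. If every anchor equals $\perp$, then by the first part there are no undiscovered edges; hence every edge of $T$ has been discovered, and since an edge is discovered only when a robot becomes adjacent to it, every node has been visited. It remains to note that no node can be non-mined either: a non-mined node would, by the covering argument above, still have a robot in its subtree with a non-$\perp$ anchor, contradiction. So all nodes are discovered and mined, i.e.\ exploration is complete. Conversely completeness forces all anchors to $\perp$ by definition, giving the stated equivalence. I would present the forward direction (all anchors $\perp$ $\Rightarrow$ done) in detail since that is the direction used later, and remark that the converse is immediate from the definitions.
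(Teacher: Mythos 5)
There is a genuine gap: you correctly identify that the heart of the statement is exhibiting a \emph{specific robot} whose anchor dominates the undiscovered edge, but you do not prove it -- you defer it to an induction on $t$ that you yourself call ``the main obstacle'' and ``really the content of the proposition.'' Worse, the invariant you propose for that induction (``each maximal non-mined subtree contains a robot'') is too weak to carry the induction step: when a node $w$ becomes mined and a robot exits through it, the former maximal subtree at $w$ splits into several maximal non-mined subtrees rooted at descendants of $w$, and knowing that the subtree at $w$ contained \emph{some} robot does not tell you that each of the new pieces does. To make the argument go through you need the stronger, pointwise statement, which is exactly the paper's one-line proof: let $v$ be the lowest \emph{discovered} ancestor of the undiscovered edge (note this also fixes a small generality issue in your setup -- in the asynchronous model an undiscovered edge need not have a discovered endpoint, so your choice of $e=(u,v)$ with $u$ discovered is not fully general); the child edge of $v$ toward the undiscovered edge is unexplored, so $v$ is not mined; and the robot that \emph{first discovered} $v$ cannot have left the subtree rooted at $v$, because leaving would require it to stand at $v$ and take the (explored) parent edge, which local greediness forbids while $v$ is adjacent to an unexplored edge. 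That robot's path to the root passes through the non-mined node $v$, so its anchor is $v$ or an ancestor of $v$, and the undiscovered edge lies below it.

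Your treatment of the second assertion is fine in outline (and you rightly note that completeness also requires every discovered node to be mined, which again follows from the same witness: while $v$ is unmined its discoverer sits in $v$'s subtree with a non-$\perp$ anchor), but it inherits the same unproven covering claim, so it does not stand on its own as written.
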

\begin{proof}
    Consider some undiscovered edge. Consider the lowest discovered ancestor of this edge. That node is not mined. Consider the robot that first discovered that node. That robot may not have left the sub-tree rooted at this node without having explored the edge that leads to the undiscovered edge.
\end{proof}
The definition of locally-greedy algorithms does not specify the move of the selected robot $r_t$ if it is not adjacent to an unexplored edge. We now define an extension of locally-greedy exploration algorithms in which all robots maintain a target towards which they perform a move when they are not adjacent to unexplored edges.

\begin{Definition}[Locally-Greedy Algorithm with Targets]
    A locally-greedy algorithm with targets maintains at all times a list of explored nodes $(v_t(r))_{r\in [k]}$ called targets such that the $t$-th move is determined by the three following rules,
\begin{enumerate}[label=R\arabic*.]
    \item \textbf{if} robot $r_t$ is adjacent to an unexplored edge \textbf{then} it selects that edge as its next move, \label{it: 1}
    \item \textbf{else} it selects as next move the adjacent edge that leads towards its target $v_t(r_t)$. \label{it: 2}
    \item Robot $r_t$ does not station at its location. \label{it: 3}
\end{enumerate}
\end{Definition}
We say that robot $r$ is targeting at $v_t(r)$ but we note that this does not mean that the robot is necessarily located below its target, as targets are different from anchors. 
An equivalent perspective on rule \eqref{it: 3} is that the value of the target of robot $r_t$ must be modified at the beginning of move $t$ if the following condition holds. 
\begin{enumerate}[label=C.]
    \item Robot $r_t$ is located at its target $v_{t-1}(r_t)$ and is not adjacent to an unexplored edge. \label{it: change_target}
\end{enumerate}
Note that any locally-greedy algorithm can be viewed as having a targets. We will thus particularly focus on algorithms for which the total movement of the targets, $\sum_{r\in [k]}\sum_{t< M} d(v_t(r),v_{t+1}(r))$ is small. This focus is motivated by the following proposition.

\begin{proposition}\label{prop:targets} After $M$ moves, a locally-greedy algorithm with targets has explored at least,
    $$ \frac{1}{2}\left(M - \sum_{i\in [k]}\sum_{t< M} d(v_t(r),v_{t+1}(r))\right)$$
edges, where $v_t(r)$ denotes the target of robot $r$ at move $t$ and $d(\cdot, \cdot)$ is the distance in the underlying tree. 
\end{proposition}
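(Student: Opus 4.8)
The plan is to split the $M$ moves into two classes and run a short potential argument. Call a move an \emph{exploration move} if rule \ref{it: 1} fires, so robot $r_t$ traverses a previously unexplored edge, and a \emph{travel move} if rule \ref{it: 2} fires, so $r_t$ steps one edge closer---along the unique tree-path---to its current target $v_t(r_t)$. Rule \ref{it: 3} forbids a robot from staying at its node, so every move is of exactly one of these two types; write $M_{\mathrm{ex}}$ and $M_{\mathrm{tr}}$ for the respective counts, so that $M = M_{\mathrm{ex}} + M_{\mathrm{tr}}$. Distinct exploration moves traverse distinct new edges (after such a move the edge is explored), so the number of explored edges after $M$ moves equals $M_{\mathrm{ex}}$ exactly, and the statement reduces to proving
\[
M_{\mathrm{tr}} \;\le\; M_{\mathrm{ex}} + \Delta, \qquad \Delta := \sum_{r\in[k]}\sum_{t<M} d\big(v_t(r),v_{t+1}(r)\big).
\]

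To prove this I would introduce the potential $\Phi_t := \sum_{r\in[k]} d\big(p_t(r), v_t(r)\big)$, where $p_t(r)$ is the position of robot $r$ after move $t$ (and $p_0(r)$, $v_0(r)$ the initial state), and bound $\Phi_{t+1}-\Phi_t$ move by move. Only robot $r := r_{t+1}$ is affected, and two effects combine. First, the possible retargeting of $r$ at the start of move $t{+}1$ (forced exactly when condition \ref{it: change_target} holds, but permitted otherwise) moves its target from $v_t(r)$ to $v_{t+1}(r)$, which by the triangle inequality changes $\Phi$ by at most $d(v_t(r), v_{t+1}(r))$; summed over all moves these contributions total at most $\Delta$, since a robot's target changes only on a move it makes. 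Second, the physical step of $r$ along one tree edge changes $d(\cdot, v_{t+1}(r))$ by at most $+1$ on an exploration move and by exactly $-1$ on a travel move, because a travel move walks the unique path toward the target and so strictly decreases the distance to it. Hence $\Phi_{t+1}-\Phi_t \le d(v_t(r),v_{t+1}(r)) + \varepsilon_{t+1}$, with $\varepsilon_{t+1} = +1$ on exploration moves and $\varepsilon_{t+1} = -1$ on travel moves.

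Summing over $t = 0, \dots, M-1$ gives $\Phi_M - \Phi_0 \le \Delta + M_{\mathrm{ex}} - M_{\mathrm{tr}}$. All robots and targets start at the root, so $\Phi_0 = 0$, while $\Phi_M \ge 0$ as a sum of distances; rearranging yields $M_{\mathrm{tr}} \le M_{\mathrm{ex}} + \Delta$, whence $M = M_{\mathrm{ex}} + M_{\mathrm{tr}} \le 2M_{\mathrm{ex}} + \Delta$, i.e. $M_{\mathrm{ex}} \ge \tfrac12(M - \Delta)$, which is exactly the claimed bound.

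The only delicate point I foresee is bookkeeping: aligning the index at which a retarget occurs (``at the beginning of move $t{+}1$'') with the physical displacement during that move, and checking that the retargeting terms accumulated in the potential telescope precisely into $\sum_{r\in[k]}\sum_{t<M} d(v_t(r),v_{t+1}(r))$ under the convention that $v_t(r)$ changes only when $r_t = r$. One should also record the small invariant that a travel move never starts on its own target---otherwise rule \ref{it: 3}/condition \ref{it: change_target} would instead force a retarget---so the ``$-1$'' on travel moves is genuinely exact; this is immediate from the definition of the algorithm. Everything else is the elementary observation that each move either uncovers a new edge or makes one unit of progress toward a target, and that target motion is the sole source of slack.
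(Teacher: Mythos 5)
Your proof is correct and follows essentially the same route as the paper: the paper proves a single-robot lemma using the potential $d_t = d(p_t,v_t)$, with rule \ref{it: 2} decreasing it by exactly one, rule \ref{it: 1} increasing it by one, the triangle inequality absorbing target motion into $\sum_t d(v_t,v_{t+1})$, and telescoping with $d_0=0$, $d_m\ge 0$, then sums over the $k$ robots. Your aggregate potential $\Phi_t=\sum_r d(p_t(r),v_t(r))$ is just that same argument carried out for all robots at once, so the two proofs are equivalent.
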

To prove the result, we present a lower-bound on the number of edges explored by a single agent involved in a locally-greedy algorithm with target. Applying the lemma below to each individual robot then yields Proposition~\ref{prop:targets}. 
    
\begin{lemma}
    Any robot starting at the root and performing $m$ moves at some instants within $\{1,\dots, M\}$ prescribed by rules \eqref{it: 1} or \eqref{it: 2}
    uses at least 
    \begin{equation*}
        \frac{1}{2}\left(m - \sum_{t<M} d(v_t,v_{t+1})\right)
    \end{equation*}
    applications of rule \eqref{it: 1}, where $(v_t)_{t\in [M]}$ denotes the list of consecutive targets used by the robot, and $v_1 = {\normalfont\texttt{root}}$. 
\end{lemma}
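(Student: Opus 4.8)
The plan is to control the robot's wandering by a single potential: the distance from the robot to its current target. Moves of type \eqref{it: 2} will be seen to decrease this potential by exactly one, while the only sources of increase --- a move of type \eqref{it: 1}, or a reassignment of the target --- are each cheap, so that the number of type-\eqref{it: 1} moves cannot be much smaller than the number of type-\eqref{it: 2} moves.

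First I would fix the robot and index the global rounds by $t\in\{1,\dots,M\}$, writing $u_t$ for the robot's location and $v_t$ for its target at round $t$, with $u_1=v_1={\normalfont\texttt{root}}$. Set $\Phi_t=d(u_t,v_t)$, so $\Phi_1=0$ and $\Phi_M\ge 0$. Recalling that the target is reassigned (condition \eqref{it: change_target}) at the start of the round, before the robot moves, I would split the one-round increment as
$$\Phi_{t+1}-\Phi_t \;=\; \big(d(u_{t+1},v_t)-d(u_t,v_t)\big) \;+\; \big(d(u_{t+1},v_{t+1})-d(u_{t+1},v_t)\big) \;=:\; a_t+b_t,$$
where $a_t$ captures the robot's step and $b_t$ the target's reassignment, and $|b_t|\le d(v_t,v_{t+1})$ by the triangle inequality. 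I would also note that in a round where the robot does not move, both $a_t=0$ and $b_t=0$ (only the moving robot's target is reassigned), so the sums below range effectively over the robot's $m$ active rounds.

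Next I would bound $a_t$ by cases. A type-\eqref{it: 1} move traverses one tree edge, so it changes the distance to the fixed vertex $v_t$ by at most one: $a_t\le 1$. A type-\eqref{it: 2} move steps the robot one edge along the path toward $v_t$; here $v_t\neq u_t$ (otherwise the robot could neither station, by rule \eqref{it: 3}, nor take an unexplored edge, since rule \eqref{it: 2} applies --- unless the target had first been moved off $u_t$ via condition \eqref{it: change_target}), hence $a_t=-1$. Writing $p$ and $q$ for the numbers of type-\eqref{it: 1} and type-\eqref{it: 2} moves among the $m$ moves, so $p+q=m$, summation gives $\sum_{t<M}a_t\le p-q$ and $\sum_{t<M}b_t\le\sum_{t<M}d(v_t,v_{t+1})$. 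Since $0\le\Phi_M-\Phi_1=\sum_{t<M}(a_t+b_t)$, we obtain $q\le p+\sum_{t<M}d(v_t,v_{t+1})$, hence $m=p+q\le 2p+\sum_{t<M}d(v_t,v_{t+1})$, which is exactly the claimed lower bound on the number $p$ of applications of rule \eqref{it: 1}. Proposition~\ref{prop:targets} then follows by applying this to each of the $k$ robots and summing, once one records the auxiliary fact that each type-\eqref{it: 1} move traverses a previously unexplored edge, so the $p$ such moves of a robot account for $p$ distinct explored edges (and each edge is explored by exactly one robot).

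I do not expect a genuine obstacle: the whole argument rests on choosing the potential $d(\text{robot},\text{target})$, after which it is a telescoping estimate. The points that require care are (i) the ordering within a round --- the target reassignment of condition \eqref{it: change_target} must be placed before the robot's step, which is what legitimizes the decomposition $\Phi_{t+1}-\Phi_t=a_t+b_t$ with the definitions above --- and (ii) verifying that rule \eqref{it: 2} is invoked only when $v_t\neq u_t$, so that a type-\eqref{it: 2} move is a true unit step toward the target and contributes $a_t=-1$ rather than $0$; both follow from rules \eqref{it: 3} and \eqref{it: change_target}.
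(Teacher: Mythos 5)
Your proof is correct and essentially identical to the paper's: it uses the same potential $d(\text{position},\text{target})$, the same add-and-subtract decomposition into a robot-step term and a target-reassignment term, and the same telescoping argument combined with nonnegativity of the potential and the triangle inequality $d(u_{t+1},v_{t+1})-d(u_{t+1},v_t)\le d(v_t,v_{t+1})$. The only (harmless) imprecision is the parenthetical claim that only the moving robot's target is ever reassigned --- the algorithm may re-target other robots when condition C fires for $r_t$ --- but your bound never relies on this, since your $a_t$ vanishes in rounds where the robot does not move and $|b_t|\le d(v_t,v_{t+1})$ holds in every round.
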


\begin{proof}
    Without loss of generality, we focus only on the moments at which the robot is allowed to move. We re-index all such instants by $t\in [m]$. We then consider the quantity,
    \begin{equation*}
        d_t = d(p_t, v_t),
    \end{equation*}
    where $p_t$ denotes the position of the robot right before move $t$. The quantity is initialised with $d_0 =0$ because at initialisation $p_0=v_0=\texttt{root}$. We note that this $d_t$ is always non-negative. We then write the update rule,
    \begin{align*}
        d_{t+1}-d_t 
        &= d(p_{t+1},v_{t+1})-d(p_t,v_t)\\
        &=d(p_{t+1},v_{t+1})-d(p_{t+1},v_{t})+d(p_{t+1},v_t)-d(p_t,v_t)\\
        &=d(p_{t+1},v_{t+1})-d(p_{t+1},v_{t})-\1(t: \ref{it: 2})+\1(t: \ref{it: 1})
\end{align*}
Where we denote by $\1(t: \ref{it: 1})$ (resp $\1(t: \ref{it: 2})$) the indicator of whether rule $\ref{it: 1}$ (resp \eqref{it: 2}) is used at move~$t$. We used the fact that every call to rule \eqref{it: 2} reduces the distance of the robot to its target by exactly one, whereas all calls to rule \eqref{it: 1} result in an increase that same distance by exactly one. Using now that $\forall t: \1(t: \ref{it: 2})+\1(t: \ref{it: 1})=1$, we get
$$d_{t+1}-d_t = d(p_{t+1},v_{t+1})-d(p_{t+1},v_{t})-1+2\1(t: \ref{it: 1})$$
which yields by telescoping,
$$d_m = \sum_{t<m} \left(d(p_{t+1},v_{t+1})-d(p_{t+1},v_{t})\right) - m + 2\#\{t : \ref{it: 1}\},$$
and thus, 
$$\#\{t : \ref{it: 1}\} = \frac{1}{2}\left(m +d_m - \sum_{t<m} \left(d(p_{t+1},v_{t+1})-d(p_{t+1},v_{t})\right)\right).$$
The simple observation that $d_m\geq 0$, and the triangle inequality $d(p_{t+1},v_{t+1})- d(p_{t+1},v_t)\leq d(v_t,v_{t+1})$ then gives the desired result. 
\end{proof}
It is clear from the proof above that the slightly more general statement below holds. 
\begin{proposition} After $M$ moves, a locally-greedy algorithm with targets has explored at least, 
    $$ \frac{1}{2}\left(M - \sum_{r\in [k]}\sum_{t< M} d(p_{t+1}(r),v_{t+1}(r))-d(p_{t+1}(r),v_{t}(r))\right)$$
where $v_t(r)$ denotes the target of robot $r$ at move $t$, and $p_t(r)$ its position, and $d(\cdot, \cdot)$ is the distance in the tree. 
\end{proposition}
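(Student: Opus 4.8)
The plan is to reuse, almost verbatim, the single-robot computation from the proof of the preceding Lemma, stopping one line before the triangle inequality is invoked, and then to sum the resulting inequalities over the $k$ robots.

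First I would fix a robot $r$ and restrict attention to the $m = m_r$ rounds at which it is allowed to move, re-indexing these by $t \in [m]$. As in the Lemma's proof, I track $d_t = d(p_t(r), v_t(r))$, the distance from robot $r$ to its current target, with $d_0 = 0$ since both start at the root. By rule~\ref{it: 3} every move is either an application of rule~\ref{it: 1}, which sends the robot to a freshly discovered descendant and hence increases the distance to the (explored) target by exactly one, or of rule~\ref{it: 2}, which decreases it by exactly one; this yields the same update as in the Lemma,
$$d_{t+1} - d_t \;=\; d(p_{t+1}(r), v_{t+1}(r)) - d(p_{t+1}(r), v_t(r)) - 1 + 2\,\1(t: \ref{it: 1}).$$
Telescoping over $t$ and discarding the nonnegative term $d_m \ge 0$ gives a lower bound on the number of rule~\ref{it: 1} applications made by robot $r$, namely $\frac12$ times $m - \sum_{t<m}\big(d(p_{t+1}(r), v_{t+1}(r)) - d(p_{t+1}(r), v_t(r))\big)$. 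This is exactly the identity obtained in the Lemma, with the bound $d(p_{t+1},v_{t+1}) - d(p_{t+1},v_t) \le d(v_t,v_{t+1})$ simply left unapplied.

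Next I would sum over $r\in[k]$. The bookkeeping fact to use is that the target of robot $r$ changes only during a move of robot $r$, so for any global round $t$ that is not a move of $r$ we have $p_{t+1}(r) = p_t(r)$ and $v_{t+1}(r) = v_t(r)$, whence the summand vanishes; thus each private sum $\sum_{t<m_r}$ may be rewritten as the full sum $\sum_{t<M}$ over all $M$ global moves without changing its value. Writing $N$ for the number of edges explored after $M$ moves, I would also note that $N$ equals the total number of rule~\ref{it: 1} applications: each such application traverses a previously unexplored edge, distinct applications traverse distinct edges, and conversely the first traversal of any edge is necessarily a rule~\ref{it: 1} move, since a rule~\ref{it: 2} move follows the tree-path toward an explored target, every edge of which already lies on an explored root-path (the tree-path between two explored nodes is contained in the union of their two explored root-paths). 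Using $\sum_{r\in[k]} m_r = M$ and adding the per-robot inequalities then gives
$$N \;\ge\; \frac12\Big( M - \sum_{r\in[k]}\sum_{t<M}\big(d(p_{t+1}(r), v_{t+1}(r)) - d(p_{t+1}(r), v_t(r))\big)\Big),$$
which is the claimed inequality.

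The one step that needs care is the summation argument above: I must justify that a robot's target stays constant between its own consecutive moves, so that each per-robot telescoping sum can be harmlessly extended to a sum over all global rounds, and I must check that the private re-indexing of a robot's move-times is compatible with the global time indices $t$ appearing in the statement. Everything else is an unchanged replay of the Lemma's proof, so I do not expect a genuine obstacle beyond this bookkeeping.
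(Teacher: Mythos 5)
Your overall route is the paper's: take the per-robot identity from the preceding Lemma, simply refrain from applying the final triangle inequality $d(p_{t+1},v_{t+1})-d(p_{t+1},v_t)\leq d(v_t,v_{t+1})$, and sum over the $k$ robots (the paper's own justification is exactly "clear from the proof above"). However, the one step you flagged as needing care is justified by a claim that is false in the intended generality: it is \emph{not} true that the target of robot $r$ changes only during a move of robot $r$. The definition of a locally-greedy algorithm with targets places no such restriction, and the algorithm \texttt{TEAM} of Theorem~\ref{th:redtm} violates it outright: when condition \ref{it: change_target} is triggered by the moving robot $r_t$, the remaining $x_u(\tau)-c$ robots targeting $u$ --- robots that are idle at that round --- are re-targeted as well. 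So the summand $d(p_{t+1}(r),v_{t+1}(r))-d(p_{t+1}(r),v_t(r))$ does \emph{not} vanish at rounds where $r$ is idle; in fact charging precisely those idle-round retargetings is the whole point of this refined proposition, and it is how $S$ is accounted for in the proof of Theorem~\ref{th:redtm}. As stated, your argument that each private sum "may be rewritten as the full sum $\sum_{t<M}$ without changing its value" rests on a false premise, even though the final inequality you reach is correct.

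The repair stays within the same proof and is short: do the telescoping directly in global time. Set $D_t(r)=d(p_t(r),v_t(r))$ and write
\begin{equation*}
D_{t+1}(r)-D_t(r)=\bigl(d(p_{t+1}(r),v_{t+1}(r))-d(p_{t+1}(r),v_t(r))\bigr)+\bigl(d(p_{t+1}(r),v_t(r))-d(p_t(r),v_t(r))\bigr),
\end{equation*}
where the second bracket equals $+1$ if $r$ moves at round $t$ by rule \ref{it: 1}, $-1$ if by rule \ref{it: 2}, and $0$ if $r$ is idle at round $t$ (position unchanged), while the first bracket is exactly the summand of the proposition whether or not $r$ moves. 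Telescoping over $t<M$, summing over $r$, using $D_0(r)=0$, $D_M(r)\geq 0$ and the fact that each of the $M$ moves is an application of \ref{it: 1} or \ref{it: 2} gives that the number of \ref{it: 1}-moves (equivalently, of explored edges, as you correctly argue) is at least the claimed quantity. Alternatively, your private-time identity can be salvaged by noting that between two consecutive moves of $r$ the position is constant, so the idle-round summands telescope to the corresponding private-time difference, with the leftover boundary terms having the right sign --- but the global-time telescoping avoids this bookkeeping entirely.
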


\subsection{Asynchronous Collective Tree Exploration via Tree-Mining}\label{sec:newred}
Locally-greedy algorithms are natural candidates to perform asynchronous collective tree exploration. In this section, we shall see that a tree-mining strategy entirely defines a locally-greedy exploration algorithm, and provides a guarantee on its performance.

\begin{theorem}\label{th:redtm} A tree-mining strategy $s_p \in \Scal_p$ induces a locally-greedy asynchronous collective tree exploration algorithm $\Bcal$ satisfying,  
$$ {\normalfont\texttt{Moves}(\Bcal,k,T)} \leq 2n +f_{s_p}(k,D),$$
for any tree $T$ with $n$ nodes and depth $D$.
\end{theorem}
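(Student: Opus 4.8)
\medskip

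The plan is to set up an explicit correspondence between a run of a locally-greedy ACTE algorithm (with targets) and a play of the tree-mining game, so that the adversary's sequence of robot-activations $r_0,r_1,\dots$ together with the discovered tree structure is translated into adversary moves in the game, and the tree-mining strategy $s_p$ is translated into the target-reassignment rule of the exploration algorithm. First I would specify which locally-greedy algorithm $\Bcal$ we build: each robot carries a target, and the targets are exactly the \emph{active leaves} of the tree-mining board, with the multiplicity $x_\ell$ equal to the number of robots currently targeting $\ell$. A robot that is not adjacent to an unexplored edge walks toward its target by rule \ref{it: 2}; when it \emph{reaches} its target and that node has just been mined (condition \ref{it: change_target}), this triggers one round of the tree-mining game, in which the adversary "deactivates" that leaf and reveals its $c$ newly-discovered children (the unexplored edges hanging off it), and $s_p$ dictates the new targets of the $x_\ell$ affected robots. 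The bookkeeping must confirm that $c \le x_\ell - 1$: this holds because a locally-greedy node populated by $x$ robots with $e$ incident unexplored edges retains $\max\{0,e-x\}$ of them, so a node only gets "mined" once the robots sitting on it have absorbed all but at most... — more carefully, we must track that the number of as-yet-unexplored children of a just-mined target is at most (robots present) $-1$, which is precisely the tree-mining constraint and is the first technical point to nail down.

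\medskip

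Next I would invoke Proposition \ref{prop:targets}: after $M$ moves, $\Bcal$ has explored at least $\tfrac12\bigl(M - \sum_{r}\sum_{t<M} d(v_t(r),v_{t+1}(r))\bigr)$ edges. The sum of target displacements is, by construction, exactly the transport cost accumulated by the tree-mining game, i.e. $\sum_r\sum_{t} d(v_t(r),v_{t+1}(r)) = \cost$ at the corresponding game round (the $-2c$ discount terms only help, so I would bound without them, or keep them for the sharper $k\in\{2,3\}$ statements elsewhere). Since the game board never has a leaf deeper than $D$ once the tree $T$ has depth $D$ — every target is an active leaf of the board, hence sits at depth $\le D$ in $T$ — the accumulated cost at any point is at most $f_{s_p}(k,D)$ by definition of an $f$-bounded strategy. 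Therefore, once $M$ is large enough that $\tfrac12(M - f_{s_p}(k,D)) \ge n-1$, all $n-1$ edges have been explored; equivalently exploration finishes with $M \le 2(n-1) + f_{s_p}(k,D) \le 2n + f_{s_p}(k,D)$ moves, which is the claimed bound.

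\medskip

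The main obstacle I anticipate is the faithfulness of the reduction in the \emph{asynchronous} regime: the tree-mining game is turn-based (adversary, then player, one round at a time), whereas in ACTE many robots move between consecutive "mining events", and several robots may be in transit toward the same target, or a target may get mined by one robot while others are still walking toward it. I would resolve this by arguing that a mining event is well-defined and that the intervening rule-\ref{it: 2} moves are exactly accounted for by the transport-distance term in Proposition \ref{prop:targets}, so that no matter how the environment interleaves robot activations, the \emph{state} (current targets $\leftrightarrow$ active leaves, robot counts $\leftrightarrow$ configuration) evolves through a valid game play and the cost is monotone in the number of mining events. A secondary subtlety is handling robots whose anchor is $\perp$ (nothing left to mine below them): by the proposition following the anchor definition, this happens exactly when exploration is complete, so before that point every robot has a genuine active-leaf target and the correspondence is total. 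Once these points are checked, the inequality follows by combining the edge-count lower bound with the $f(k,D)$-boundedness, as above.
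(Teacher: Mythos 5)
Your construction is the same as the paper's (targets tracked per robot, condition \ref{it: change_target} triggering one round of the game with adversary move $(u,c)$, the player's response dictating the retargeting, and Proposition~\ref{prop:targets} plus the depth-$D$ bound on active leaves closing the argument), but there is a genuine gap in the cost accounting. You claim $\sum_r\sum_t d(v_t(r),v_{t+1}(r)) = \cost$ and that the $-2c$ discounts ``only help, so I would bound without them.'' This is backwards: the paper's cost is defined as $\cost(\tau+1)=\cost(\tau)+d(\bx(\tau),\bx(\tau+1))-2c(\tau)$, so $f_{s_p}(k,D)$ bounds the \emph{discounted} cost, while the total target displacement equals $\sum_\tau d(\bx(\tau),\bx(\tau+1)) = \cost + 2\sum_\tau c(\tau)$. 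Since $\sum_\tau c(\tau)$ can be as large as $n-1$ (each created board-child is a node of $T$), your argument as written only yields ${\normalfont\texttt{Moves}}\leq 2n+f_{s_p}(k,D)+2\sum_\tau c(\tau)$, i.e.\ roughly $4n+f_{s_p}(k,D)$, not the claimed $2n+f_{s_p}(k,D)$. The paper absorbs exactly this discrepancy by using the refined form of Proposition~\ref{prop:targets}, in which the per-update increment is $\sum_r \bigl(d(p_{t+1}(r),v_{t+1}(r))-d(p_{t+1}(r),v_t(r))\bigr)$: the $c$ robots targeting $u$ that already sit strictly below $u$ are retargeted to the child of $u$ heading their branch, so each contributes $-1$ instead of $+1$, and the increment is exactly $d(\bx(\tau),\bx(\tau+1))-2c(\tau)$, i.e.\ the game's discounted cost increment. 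Without this (or an equivalent accounting of the $-2c$), the stated inequality is not established.

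A related slip, which you flag but resolve incorrectly: $c$ is \emph{not} the number of ``unexplored edges hanging off'' the just-mined target $u$ --- at the moment condition \ref{it: change_target} fires there are no unexplored edges at $u$, so that reading would force $c=0$. In the reduction, $c$ is the number of robots targeting $u$ that are currently located on strict descendants of $u$ (robots that earlier took rule-\ref{it: 1} moves out of $u$ and are still exploring below); the $c$ new active leaves are the children of $u$ parenting those branches. This is precisely why $c\le x_u(\tau)-1$ (robot $r_t$ itself is at $u$, not below it), and it is these same $c$ robots whose retargeting produces the $-2c$ term above, so getting this correspondence right is what makes both the validity of the adversary move and the sharp cost identity go through.
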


\begin{proof} We assume that we are given a tree-mining strategy $s_p$ and we design an asynchronous collective exploration algorithm $\Bcal$. The algorithm $\Bcal$ is a locally-greedy algorithm with targets, and the update of targets is done using the tree-mining strategy. The pseudo-code of $\Bcal$ is formally given below in Algorithm \ref{alg:tream}. More specifically, at all moves~$t>0$, the algorithm maintains a target $v_t(r)$ for each robot $r\in [k]$. At the start of the algorithm, all robots are assumed to be mining at the root, i.e., $\forall r\in [k] : v_0(r) = \texttt{root}$. At move~$t$, before robot $r_t$ is assigned a move, the list of targets is updated if and only if the following condition is met,
\begin{enumerate}[label=C.]
    \item Robot $r_t$ is located at its target $v_{t-1}(r_t)$ and is not adjacent to an unexplored edge. 
\end{enumerate}
One may notice that when condition \eqref{it: change_target} is satisfied, neither of rules \eqref{it: 1} or \eqref{it: 2} can be applied straightforwardly, since the robot that shall move is already located at its target. 

We now describe the update of the targets $(v_{t-1}(r))_{r\in [k]}$ at move $t$. Let $\tau$ the number of times condition \eqref{it: change_target} was raised so far in the exploration. Let $L(\tau)$ be the set of all current targets and $T(\tau)$ be the tree given by all current and previous targets (the fact that they form a tree containing the root will be obvious from what follows). For any target $v\in L(\tau)$ we let $x_v(\tau)$ be the number of robots targeting $v$, and observe that $\bx(\tau) = (x_v(\tau))_{v \in L(\tau)}$ is a configuration. Also, for shorthand, we denote by $u := v_{t-1}(r_t)$ the present target of robot $r_t$, which triggered condition \eqref{it: change_target}. Finally, we decompose the $x_u(\tau)$ robots targeted at $u$ into two groups,
\begin{itemize}
    \item[--] $c$ robots that are located on descendants on $u$, but not on $u$; 
    \item[--] $x_u(\tau)-c$ robots that are not located on descendants of $u$,
\end{itemize}
We note that $\Tcal(\tau) = (T(\tau),\bx(\tau))$ forms a board in the sense of the tree-mining game. We also observe that $(u,c)$ is a valid choice for the adversary, since $u$ is an active leaf and $c\leq x_u(\tau)-1$ because at least one robot with target $u$ is located at $u$. Using the strategy $s_p$ we obtain a new configuration $\bx(\tau+1)$ as the response of the player. 
The targets $(v_{t-1}(r))_{r\in [k]}$ are then updated as follows. All $c$ robots that were exploring below $u$ are assigned as target the children of $u$ that is the parent to the branch that they are already exploring. The remaining $x_u(\tau)-c$ robots are re-targeted arbitrarily such as to respect the new load $\bx(\tau+1)$ prescribed by the player. 

\subsubsection*{Analysis of the number of moves} We now turn to the analysis of the maximum number of moves required by the algorithm described above to complete exploration. By property of target-based algorithms, after $M$ exploration moves, the number of edges that have been discovered is at least, $\frac{1}{2}(M - S)$, where $S$ is incremented by 
$\sum_{r\in [k]} d(p_{t+1}(r),v_{t+1}(r))-d(p_{t+1}(r),v_{t}(r)) = d(\bx(\tau),\bx(\tau+1))-2c$, when the targets are updated.
Note that this quantity is the increment of the cost in the associated tree-mining game. Also observe that the corresponding tree-mining game has at least one active leaf (in fact, all) below depth $D$. Thus we have $S \leq f_{s_p}(k,D)$. Since the total number of edges is bounded by $n$, we have $M-S \leq 2n$. The total number of moves $M$ before exploration is completed must thus satisfy
$M \leq 2n +f_{s_p}(k,D),$
which concludes the proof. 
\end{proof}

\begin{algorithm}[H]
\caption{\texttt{TEAM} (Tree-Mining Exploration Algorithm)}\label{alg:tream}
\begin{algorithmic}[1]
\Require A strategy for the tree-mining game.
\Ensure Prescribes robot moves until exploration.
\State $\forall r\in \{1,\dots,k\}: v(r) \gets \texttt{root}$ \Comment{Robots are initially located and targeted at the root.}
\While{$t\geq 0$}
\If{robot $r_t$ is adjacent to unexplored edge $e$} 
\State $\texttt{MOVE-ALONG}(r_t,e)$ \Comment{apply rule \eqref{it: 1}}
\Else
\State $u\gets v(r_t)$
\If{robot $r_t$ is located at $u$}
\State $c \gets $ the number of robots that are strictly below $u$
\State $\bx \gets $ configuration representing number of robots assigned to each target
\State $T \gets$ board tree of all past targets, where current targets are the active leaves 
\State $\bx' \gets $ the response of the strategy to move $(u,c)$ in board $\Tcal = (T,\bx)$
\State \texttt{Modify} the targets $(v(r))_{r\in [k]}$ to follow configuration $\bx'$ 
\EndIf
\State $\texttt{MOVE-TOWARDS}(r_t,v(r_t))$ \Comment{apply rule \eqref{it: 2}}
\EndIf
\EndWhile
\end{algorithmic}
\end{algorithm}

\subsection{Tree-Mining via Asynchronous Collective Tree Exploration}\label{sec:reciprocal}
In this section, we show a reciprocal to Theorem \ref{th:redtm}. It is readily applied in Proposition \ref{prop:lower-bound} to obtain a lower-bound on the overhead of asynchronous collective tree exploration. 
\begin{theorem}
For any asynchronous locally-greedy collective exploration algorithm $\Bcal$ with additive overhead below $f(k,D)$, i.e. satisfying ${\normalfont\texttt{Moves}(\Bcal,k,T)} \leq 2n +f(k,D)$ for any tree $T$ with $n$ nodes and depth $D$, 
one can define a $f(k,D)$-bounded tree-mining strategy $s_p\in \Scal_p$.
\end{theorem}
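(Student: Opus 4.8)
The plan is to reverse the reduction of Theorem~\ref{th:redtm}: run the given locally-greedy algorithm $\Bcal$ as a subroutine, fed by an ACTE environment (the tree, the activation order, and the edge reveals) that plays the role of the tree-mining adversary, and read off the tree-mining player's move from the target reconfiguration that $\Bcal$ performs. The correspondence is the mirror image of the one in the proof of Theorem~\ref{th:redtm}: active leaves of the tree-mining board are the current robot targets of $\Bcal$, a leaf $\ell$ carrying $x_\ell$ miners corresponds to $x_\ell$ robots targeting the node $\ell$, an adversary move $(\ell,c)$ corresponds to an occurrence of condition~\eqref{it: change_target} at $\ell$ with exactly $c$ robots strictly below $\ell$, and the response $s_p(\Tcal,\ell,c)$ is the new configuration produced by $\Bcal$ at that event. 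Since the simulation is deterministic, fixing for each board a canonical preceding ACTE history turns this into a bona fide strategy $s_p\in\Scal_p$ depending only on the current board and the adversary's move.

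Concretely, I would maintain the invariant that, once the tree-mining game has reached a board $\Tcal=(T_{\mathrm{tm}},\bx)$, the ACTE tree revealed so far is exactly $T_{\mathrm{tm}}$, every non-active-leaf node is mined, the active leaves have no revealed unexplored edge, and each robot is parked at its target, with $x_\ell$ robots at each active leaf $\ell$. Given a move $(\ell,c)$ played by the tree-mining adversary, I realise it as follows. Activate $c$ of the robots sitting at $\ell$ one at a time; for each, let the environment reveal a \emph{fresh} unexplored edge of $\ell$, so that by local greediness (rule~\eqref{it: 1}) the robot traverses it and descends into a new child — doing this with distinct edges produces distinct children $\ell_1,\dots,\ell_c$. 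Because $c\le x_\ell-1$, at least one robot remains at $\ell$; activate it and let the environment reveal that $\ell$ has no further unexplored edge, which mines $\ell$ and raises condition~\eqref{it: change_target} at $\ell$, now with exactly $c$ robots strictly below it. Thus $(\ell,c)$ is a legal adversary move, and I record the target reconfiguration $\Bcal$ then performs as $\bx(\tau+1)=s_p(\Tcal,\ell,c)$; local greediness is precisely what keeps each of the $c$ descended robots below the corresponding $\ell_i$ (they are the discoverers of those not-yet-mined nodes), so every $\ell_i$ keeps a miner and $\bx(\tau+1)$ is a legal tree-mining response in which $\ell$ has acquired $c$ new children. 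Finally, walk the re-targeted robots to their new targets via rule~\eqref{it: 2}, never re-activating a robot already parked at a target active leaf (which would spuriously raise condition~\eqref{it: change_target}), re-establishing the invariant for $\Tcal(\tau+1)$.

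The cost accounting is then essentially that of Theorem~\ref{th:redtm}: at each occurrence of condition~\eqref{it: change_target} the increment of $\Bcal$'s target-movement penalty equals $d(\bx(\tau),\bx(\tau+1))-2c(\tau)$, which is exactly the increment of the tree-mining cost, while the walk-home moves leave the penalty unchanged; hence after $\tau$ simulated rounds the penalty accumulated by $\Bcal$ equals $\cost(\tau)$. Moreover the number of ACTE moves spent on round $\tau$ is $d(\bx(\tau),\bx(\tau+1))+1$ (namely $c(\tau)$ descents, one mining move, and $d(\bx(\tau),\bx(\tau+1))-c(\tau)$ walk-home moves), while the board tree then has $1+\sum_{\sigma\le\tau}c(\sigma)$ nodes, so writing $M$ for the move count and $n$ for the node count after $\tau$ rounds we obtain the identity $\cost(\tau)=M-\tau-2(n-1)$. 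Now fix an adversary $s_a$ and let $\tau^\star$ be the round maximising $\cost$ among rounds where some miner is at depth $\le D$, with $M^\star$ moves over a board tree of $n^\star$ nodes; complete the ACTE exploration cheaply — have the environment mine each remaining active leaf in one move, adding no node — so that it terminates on a tree $T$ with $n^\star$ nodes, whence $M^\star\le{\normalfont\texttt{Moves}}(\Bcal,k,T)\le 2n^\star+f(k,\mathrm{depth}(T))$ and therefore $\cost(\tau^\star)=M^\star-\tau^\star-2(n^\star-1)\le f(k,\mathrm{depth}(T))-\tau^\star+2$.

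The main obstacle is the last step, the depth bookkeeping: to read off the clean bound $f_{s_p}(k,D)\le f(k,D)$ one needs $\mathrm{depth}(T)\le D$, i.e. one must argue that it suffices to consider adversaries whose board tree never descends below depth $D$ (equivalently, that a deep excursion costs the adversary at least as many rounds as it gains, since a board tree of depth $D'$ requires at least $D'-1$ rounds to build, so that $\tau^\star\ge D'-1$ and the slack $-\tau^\star+2$ in the displayed inequality absorbs the growth of $f$ in its second argument); the corner cases $\tau^\star\in\{0,1\}$ are handled by hand. The remaining points are routine but must be executed with care: that local greediness genuinely traps the descended robots below the new children so that the induced moves are legal, that the canonicalised ACTE history makes $s_p$ a well-defined element of $\Scal_p$ on $\mathfrak{T}\times V\times\Nbb$, and that the parking invariant is maintainable throughout (in particular that the walk-home phase never triggers an unwanted occurrence of condition~\eqref{it: change_target}).
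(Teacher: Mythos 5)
Your reduction is essentially the paper's: you run $\Bcal$ against an ACTE environment that realizes each adversary move $(\ell,c)$ by activating the robots at $\ell$, revealing fresh unexplored edges to the first $c$ of them (local greediness forces the descents) and none to the rest, you read the player's response off the resulting redistribution of the $x_\ell$ robots, and you bound the tree-mining cost by $\Bcal$'s movement overhead. Two points diverge from the paper's proof. First, the paper does not route the correspondence through targets and condition \eqref{it: change_target}: it simply keeps granting moves to the robots that were at $\ell$ until each one reaches an active leaf of the new board and blocks it there, reading $\bx(\tau+1)$ from the robots' realized positions (termination following from the finiteness of the overhead). This covers an arbitrary locally-greedy $\Bcal$, whereas your parking invariant, canonical-history device and rule-\ref{it: 2} walk-home phase presuppose a target-based $\Bcal$; this is harmless in spirit (the paper remarks that any locally-greedy algorithm can be viewed as having targets), but it is an extra layer the statement does not grant you.

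Second, and more substantively, the step you yourself flag as the main obstacle is resolved incorrectly. The absorption argument ``a board of depth $D'$ needs $\tau^\star\ge D'-1$ rounds, so the slack $-\tau^\star+2$ absorbs the growth of $f$ in its second argument'' requires $f(k,D')-f(k,D)\lesssim D'-D$, i.e.\ slope at most $1$ in the depth variable; the functions relevant here grow like $c_kD$ with $c_k\geq 2$ (indeed $c_k=\Omega(k\ln k)$ by Proposition~\ref{prop:lower-bound}), so a deep excursion is not paid for by the extra rounds needed to build it, and your displayed inequality only yields $\cost(\tau^\star)\le f(k,\mathrm{depth}(T))+2$ rather than $f(k,D)$. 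The paper's proof does not attempt this refinement: it concludes directly from the fact that the tree-mining cost is dominated by the exploration movement cost together with the overhead bound, mirroring the accounting of Theorem~\ref{th:redtm}. So either drop the excursion-absorption claim and argue as the paper does, or find a genuinely different justification for discarding adversaries that dig below depth $D$ while keeping one miner shallow; as written, that step of your proposal would fail.
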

\begin{proof}
Consider an asynchronous collective exploration algorithm $\Bcal$. We shall use this algorithm to explore a graph that corresponds to the board of a tree-mining game. Assume that at some round $\tau$ of the game, the board $\Tcal(\tau)$ has $L(\tau)$ active leaves on which $k$ robots are located, following a configuration $\bx(\tau)$. At this round, we assume that the adversary selects some node $\ell\in L(\tau)$ and provides it with $c$ children. We will design the response of the player using the exploration algorithm $\Bcal$. 
We assume that the adversary grants a move to all robots in $u$, providing an unexplored edge to each of the first $c$ robots, and no unexplored edge to the subsequent robots. The robots that were located at $u$ are given a move until they reach an active leaf in $L(\tau+1)$ and then they are blocked. It must be the case that the moves will stop because the additive overhead of $\Bcal$ is finite. When all robots have all reached $L(\tau+1)$, we note that the final configuration forms a response for the player $\bx(\tau+1)$. Also, the movement cost payed by the exploration algorithm $\Bcal$ is at least that paid by the tree-mining game. Consequently, we have $f_{s_p}(k,D)\leq f(k,D)$.
\end{proof}

\begin{proposition}\label{prop:lower-bound}
    For any strategy $s_p\in \Scal_p$ of the player, and any $k,D \in \Nbb: k(\ln(k)-4)D \leq f_{s_p}(k,D).$
    Consequently, this is a lower-bound on the overhead of any asynchronous locally-greedy exploration algorithm. 
\end{proposition}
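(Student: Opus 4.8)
The plan is to prove the displayed inequality directly as a statement about the tree‑mining game, and then read off the claim about locally‑greedy ACTE algorithms from the reciprocal reduction above, used in the contrapositive.

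Fix any $s_p\in\Scal_p$; I would have the adversary play in $D$ identical phases, maintaining the invariant that at the start of each phase all $k$ miners lie on a single active leaf $v$, at depth $d$ with $d$ running over $0,1,\dots,D-1$. Each phase consists of one \emph{shatter} move followed by $k-2$ \emph{merge} moves. In the shatter move the adversary deactivates $v$ and gives it the maximal number of children $c=k-1$; since $k$ miners must be spread over $k-1$ new leaves with at least one each, the player is forced — up to which leaf receives the extra miner — to put exactly one miner on all but one child of $v$. In each merge move the adversary deactivates, with $c=0$ children, whichever currently active leaf carries the most miners, and by the rules the player must relocate that leaf's miners onto the remaining active leaves. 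After $k-2$ merge moves a single active leaf remains, now at depth $d+1$ and carrying all $k$ miners, which restores the invariant for the next phase.

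The key observation is that throughout a phase the active leaves are exactly a subset of the children of $v$, hence pairwise at distance exactly $2$; so a merge move deactivating a leaf with $x$ miners costs exactly $2x$, no matter how the player redistributes them, and by pigeonhole, when $b$ children of $v$ are still active the heaviest one carries at least $\lceil k/b\rceil$ of them. Therefore the cost of a single phase is at least
\[
-2(k-1)\;+\;\sum_{b=2}^{k-1} 2\left\lceil \frac{k}{b}\right\rceil \;\ge\; -2(k-1)+2k\sum_{b=2}^{k-1}\frac1b ,
\]
where the $-2(k-1)$ accounts for the only discount incurred in the phase (the $k-1$ edges of the shatter move). Using $\sum_{b=2}^{k-1}\frac1b\ge \ln k - 2$ — trivially for small $k$, and from $\sum_{b=2}^{k-1}\frac1b=\ln k+\gamma-1+o(1)$ otherwise — the per‑phase cost is at least $k(\ln k-4)$; summing over the $D$ phases, and noting that at the end of phase $d$ some miner sits at depth $d+1\le D$, so that all of this cost is "attained while some miner is at depth at most $D$", gives $f_{s_p}(k,D)\ge k(\ln k-4)D$.

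I do not expect a genuine obstacle: in contrast with the upper bound, the adversary's moves leave the player no relevant freedom — the shatter placement is forced up to symmetry, and in a merge move every relocated miner travels distance exactly $2$ whatever the player does — so the pigeonhole bound on the heaviest leaf goes through against every $s_p$. The only care needed is with the constant: the slack hidden in "$-4$" is there precisely to absorb the harmonic constant, the ceilings and the $2(k-1)$ discount (the true per‑phase cost is $\Theta(k\ln k)$), and to make the statement hold trivially in the small‑$k$ regime where its right‑hand side is negative. The last sentence of the proposition is then immediate: if some locally‑greedy ACTE algorithm had additive overhead below $k(\ln k-4)D$, the reciprocal reduction would produce a tree‑mining strategy $s_p$ with $f_{s_p}(k,D)<k(\ln k-4)D$, contradicting what was just shown.
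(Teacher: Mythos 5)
Your proof is correct and follows essentially the same route as the paper: the adversary repeatedly splits the single populated leaf into $k-1$ children and then kills the heaviest remaining leaf (at least $\lceil k/h\rceil$ miners, each relocated at cost $2$) until one leaf survives one level deeper, and the per-phase cost is bounded below by the same harmonic sum minus the $2(k-1)$ discount. The only differences are cosmetic (your explicit phase invariant and slightly different harmonic-sum constants, both absorbed by the slack in the "$-4$"), and your closing contrapositive use of the reciprocal reduction matches the paper's.
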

\begin{proof}
We present a strategy of the adversary $s_a\in \Scal_a$ satisfying $k(\ln(k)-4)D \leq \min_{s_p\in \Scal_p}\cost(s_p,s_a,k,D)\leq \min_{s_p\in \Scal_p} f_{s_p}(k,D)$. First, the adversary chooses the root and provides it with $k-1$ children. Then it successively kills each of these children, always choosing the one which has the highest number of miners, until there is only one remaining active leaf. Then the adversary repeats this phase until it reaches depth $D$. Note that when there are $2\leq h \leq k-1$ active leaves, the leaf chosen by the adversary contains at least $\ceil{k/h}$ miners and yields a cost of $2$ each. Thus the increase of the cost during one phase is at least $\sum_{h=2}^{k-1}2\ceil{k/h}-2(k-1)\geq k\int_3^k\frac{1}{h}dh-2(k-1)\geq k(\ln(k)-\ln(3)-2)\geq k(\ln(k)-4)$. 
\end{proof}

\subsection*{Acknowledgements} The author thanks Laurent Massoulié and Laurent Viennot for stimulating discussions. 
\bibliography{biblio}
\appendix
\section{Calculations used in Section \ref{sec: game}}\label{sec: calculus}
\begin{lemma}\label{lemma: recursion}
    For some fixed reals $\alpha,\beta,\gamma, c_1,c_2>0$ satisfying $\alpha\geq \frac{\beta+1}{2\ln(2)}$, there exists a constant $k_0:=k_0(\alpha,\beta,\gamma,c_1,c_2)$ such that the quantity, 
    $$u_k = \exp(\alpha \ln k^2) = k^{\alpha \ln k},$$
    satisfies the relation, 
    $$\forall k\geq k_0: u_{k-1}+c_1k^\beta u_{\ceil{k/2}}+c_2k^\gamma\leq u_k.$$
\end{lemma}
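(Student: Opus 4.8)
The plan is to divide the target inequality through by $u_k=\exp\!\big(\alpha(\ln k)^2\big)$ and prove that the resulting sum of three ratios stays below $1$ for all sufficiently large $k$, i.e.\
\[
\frac{u_{k-1}}{u_k} + c_1 k^\beta\,\frac{u_{\ceil{k/2}}}{u_k} + c_2 k^\gamma\,\frac{1}{u_k} \;\le\; 1 .
\]
The first ratio equals $\exp\!\big(-\alpha[(\ln k)^2-(\ln(k-1))^2]\big)$, which tends to $1$ from below; the whole argument rests on showing that the deficit $1-\frac{u_{k-1}}{u_k}$ decays slowly enough (like $\ln k/k$) to dominate the other two ratios.

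First I would lower-bound the exponent gap of the first ratio. Factoring $(\ln k)^2-(\ln(k-1))^2=(\ln k-\ln(k-1))(\ln k+\ln(k-1))$ and using $\ln\frac{k}{k-1}=\ln(1+\tfrac1{k-1})\ge\tfrac1k$ together with $\ln k+\ln(k-1)\ge\ln k$ (valid for $k\ge2$) gives $\ln(u_k/u_{k-1})\ge\frac{\alpha\ln k}{k}$. Since this quantity is at most $1$ once $k$ is large, the elementary inequality $e^{-x}\le1-\tfrac x2$ on $[0,1]$ yields $\frac{u_{k-1}}{u_k}\le1-\frac{\alpha\ln k}{2k}$. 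Hence it suffices to show
\[
c_1 k^\beta\,\frac{u_{\ceil{k/2}}}{u_k} + c_2 k^\gamma\,\frac{1}{u_k}\;\le\;\frac{\alpha\ln k}{2k}
\]
for $k$ large.

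For the middle ratio I would use $\ln\ceil{k/2}\le\ln k-\ln2+\tfrac1k$; squaring and discarding negative terms gives $(\ln\ceil{k/2})^2-(\ln k)^2\le-2\ln2\,\ln k+(\ln2)^2+\frac{4\ln k}{k}$, whence $\frac{u_{\ceil{k/2}}}{u_k}\le e^{\alpha(\ln2)^2}e^{4\alpha\ln k/k}\,k^{-2\alpha\ln2}\le 2e^{\alpha(\ln2)^2}\,k^{-2\alpha\ln2}$ for $k$ large. Here the hypothesis $\alpha\ge\frac{\beta+1}{2\ln2}$ enters exactly: it gives $\beta-2\alpha\ln2\le-1$, so $c_1 k^\beta\,\frac{u_{\ceil{k/2}}}{u_k}\le 2c_1 e^{\alpha(\ln2)^2}k^{-1}$. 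The last ratio is even smaller, since $\exp(\alpha(\ln k)^2)$ outgrows every power of $k$: $c_2 k^\gamma/u_k\le k^{-1}$ for $k$ large. Adding these two bounds, the displayed inequality reduces to $(2c_1 e^{\alpha(\ln2)^2}+1)/k\le\frac{\alpha\ln k}{2k}$, i.e.\ to $\ln k\ge\frac{2}{\alpha}(2c_1 e^{\alpha(\ln2)^2}+1)$, which holds once $k$ is large; taking $k_0$ to exceed all the finitely many thresholds invoked along the way finishes the proof.

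The one delicate point — and the reason equality is permitted in the hypothesis — is the comparison of decay rates in the borderline case $\alpha=\frac{\beta+1}{2\ln2}$, which is precisely the case $c_1=2,\beta=1$ relevant to Theorem~\ref{th:mainth}. There the middle term decays only like $1/k$, matching the order of the terms being compared, so one must retain the logarithmic factor in the estimate $\frac{u_{k-1}}{u_k}\le1-\frac{\alpha\ln k}{2k}$ rather than coarsening it to $1-O(1/k)$; the surviving $\ln k$ is what ultimately provides the slack. Everything else is routine manipulation with $\ln(1+x)$-type bounds.
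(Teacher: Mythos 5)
Your proof is correct and follows essentially the same route as the paper: lower-bound the consecutive ratio gap by $\frac{\alpha\ln k}{2k}$, bound $u_{\ceil{k/2}}/u_k$ by a constant times $k^{-2\alpha\ln 2}$ so that the hypothesis $2\alpha\ln 2\geq \beta+1$ makes the middle term $O(1/k)$, note the last term is negligible, and let the surviving $\ln k$ factor absorb the constants for $k\geq k_0$. The only differences (working at index $k$ versus the paper's shifted index $k+1$, and using $e^{-x}\leq 1-x/2$ in place of $e^{y}\geq 1+y$) are cosmetic.
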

\begin{proof}
For clarity, we chose to prove the bound $u_{k}+c_1(k+1)^\beta u_{\ceil{(k+1)/{2}}}+c_2(k+1)^\gamma\leq u_{k+1}$. The computation of $u_{k+1}$ goes as follows,
\begin{align*}
u_{k+1} &= \exp(\alpha \ln(k+1)^2)\\
&= \exp(\alpha (\ln k+\ln(1+1/k))^2)\\
&\geq \exp(\alpha \ln k^2+2\alpha\ln k\ln(1+1/k))\\
&\geq \exp(\alpha \ln k^2)(1+2\alpha\ln k\ln(1+1/k))\\
&\geq u_k\left(1+\frac{\alpha\ln k}{2k}\right), \quad \text{for } k\geq 3.
\end{align*}
We also have, since $\ceil{(k+1)/2}\leq k/2+1$,
\begin{align*}
    u_{\ceil{(k+1)/2}} 
    &\leq \exp(\alpha \ln(k/2+1)^2)\\
    &\leq \exp(\alpha (\ln k+\ln(1+2/k)-\ln(2))^2)\\
    &\leq \exp(\alpha (\ln k-\ln(2))^2+2\alpha \ln k\ln(1+2/k)) \\
    &\leq \exp(\alpha (\ln k-\ln(2))^2)\exp(4\alpha \ln k/k)\\
    &\leq \exp(\alpha \ln k^2-2\alpha \ln(2)\ln k+\alpha\ln(2)^2) \exp(4\alpha)\\
    &\leq \exp(5\alpha) u_k/k^{2\alpha \ln(2)}.
\end{align*}
When putting these equations together, we obtain,
\begin{align*}
     u_k + (k+1)^\beta u_{\ceil{(k+1)/2}} +(k+1)^2 \leq u_{k}\left(1 + \mathcal{O}\left(\frac{1}{k^{2\alpha\ln(2)-\beta}}\right)+\frac{(k+1)^2}{u_{k}}\right).
\end{align*}
Since we clearly have $\frac{(k+1)^2}{u_{k}} = \mathcal{O}(\frac{1}{k})$ and since by assumption $2\alpha\ln(2)-\beta \geq 1$, we obtain,   \begin{align*}
     u_k + (k+1)^\beta u_{\ceil{(k+1)/2}} +(k+1)^2 \leq u_{k}\left(1 + \mathcal{O}\left(\frac{1}{k}\right)\right),
\end{align*}
which is sufficient to conclude.
\end{proof}

\begin{lemma}\label{lemma:asymptotics}
    The sequence $(c_k)_{k\geq 2}$ defined by,
    $
    \begin{cases}
        c_2=2,\\
        c_k = c_{k-1}+2kc_{\ceil{k/2}}+20k^2,
    \end{cases}$ satisfies
    $c_k = \mathcal{O}(k^{\log_2 k}).$
\end{lemma}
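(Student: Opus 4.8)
The plan is to derive the bound straight from Lemma~\ref{lemma: recursion}, which is tailored to recursions of exactly this shape. First I would read off the parameters of the recursion defining $(c_k)$: writing $2k = 2k^\beta$ with $\beta = 1$ and $c_1 = 2$, and $20k^2 = c_2 k^\gamma$ with $\gamma = 2$ and $c_2 = 20$. The key decision is the choice of exponent: I would take $\alpha = \tfrac{1}{\ln 2}$, which satisfies the hypothesis $\alpha \ge \tfrac{\beta+1}{2\ln 2} = \tfrac{1}{\ln 2}$ of Lemma~\ref{lemma: recursion} \emph{with equality}, and for which the test sequence of that lemma becomes $u_k = \exp\!\big(\tfrac{1}{\ln 2}(\ln k)^2\big) = k^{\log_2 k}$ --- precisely the target growth rate. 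Lemma~\ref{lemma: recursion} then hands us a threshold $k_0$, which we may enlarge to assume $k_0 \ge 3$, such that
\[
u_{k-1} + 2k\, u_{\ceil{k/2}} + 20k^2 \le u_k \qquad\text{for all } k \ge k_0 .
\]

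Next I would run a standard strong induction to push this inequality from $u_k$ onto $c_k$. Put $C = \max\!\big\{1,\ \max_{2 \le k \le k_0} c_k / u_k\big\}$, a finite maximum over finitely many positive numbers, so that $c_k \le C u_k$ holds for all $k \le k_0$. For $k > k_0$, assuming $c_j \le C u_j$ for every $j < k$ (in particular for $j = k-1$ and $j = \ceil{k/2}$, both of which lie strictly below $k$ and at or above $2$), the recursion gives
\[
c_k = c_{k-1} + 2k\, c_{\ceil{k/2}} + 20k^2 \le C u_{k-1} + 2kC\, u_{\ceil{k/2}} + 20k^2 \le C\big(u_{k-1} + 2k\, u_{\ceil{k/2}} + 20k^2\big) \le C u_k,
\]
using $C \ge 1$ to absorb the additive $20k^2$ term and then the inequality from Lemma~\ref{lemma: recursion}. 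Hence $c_k \le C\, k^{\log_2 k}$ for all $k \ge 2$, which is the claimed $\mathcal{O}(k^{\log_2 k})$.

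I do not anticipate a real obstacle: the whole content is packaged inside Lemma~\ref{lemma: recursion}, and the transfer step is mechanical. The one point that genuinely matters is the choice $\alpha = 1/\ln 2$ --- the exponent $\log_2 k$ sits exactly at the boundary of what Lemma~\ref{lemma: recursion} permits, so there is no room to be wasteful, and one must be careful that the boundary case $\alpha = \tfrac{\beta+1}{2\ln 2}$ is covered (it is, since that lemma is stated with a non-strict inequality). The only other fussy detail is the bookkeeping around small indices: arranging $k_0 \ge 3$ so that $k-1$ and $\ceil{k/2}$ remain within the index range on which $(c_k)$ is defined, and folding all values $k \le k_0$ into the base case through the constant $C$.
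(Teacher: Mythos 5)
Your proposal is correct and follows essentially the same route as the paper: invoke Lemma~\ref{lemma: recursion} with $\alpha = 1/\ln 2$, $\beta = 1$, $\gamma = 2$ (noting the boundary case $2\alpha\ln 2 - \beta = 1$ is allowed), then transfer the inequality to $(c_k)$ by induction with a constant absorbing all indices up to $k_0$. Your bookkeeping (taking $C \ge 1$ and $C \ge c_k/u_k$ for $k \le k_0$) is in fact slightly more careful than the paper's terse version of the same argument.
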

\begin{proof}We can use the lemma above, taking $\alpha = \frac{1}{\ln(2)}, \beta= 1, \gamma=2$ which define a constant $k_0(\alpha,\beta,\gamma)$. We then take $C = \max\{c_1,\dots, c_{k_0}\}$. The property is initialized for all values of $3\leq k\leq k_0$, it is then clear by induction that $u_k\leq Ck^{\log_2 k}$, for all values $k\geq 2$.    
\end{proof} 
\section{Detailed construction of the tree-mining strategy}\label{sec:details}

In this section we slightly complexity the rules of the tree-mining game, to allow for a rigorous definition of the tree-mining strategy described in Section \ref{sec:generaltm}. These modifications lead to an equivalent variant of the game called the \textit{extended tree-mining game}. 

\subsection{Extended Tree-Mining Game}
The extended tree-mining game enjoys the same definitions as the regular tree-mining game described in Section \ref{sec: game-description}, along with two additional rules,
\begin{enumerate}
[label=E\arabic*.]
    \item The adversary may add a miner to the mine. \hfill (New Miners)
    \item At all rounds, the player may move miners between active leaves. \hfill (Non-Lazy Moves)
\end{enumerate}

\subsubsection*{New miners} As was briefly mentioned in Section \ref{sec: game}, it will be useful to assume that the game starts with a single miner located at the root, and that the adversary may add (but not withdraw) a miner to the mine at any point in time. When the adversary adds a miner to the mine, the player may insert it at any currently active leaf without any movement cost. The adversary may choose to start by adding $k$ miners and never add another miner later in the game; in this case, the strategy of the player can be seen as a strategy for the setting where the number of miners is fixed. 

\subsubsection*{Non-lazy moves} It is sometimes convenient for the player to move a miner from an active leaf to another, although neither of these leaves have been selected by the adversary. These moves are called ``non-lazy''. Non-lazy moves induce a movement cost equal to the total distance travelled by the miners. Non-lazy moves have a simple interpretation in terms of their corresponding exploration algorithm as defined in Section \ref{sec: red}. A non-lazy move from a leaf/target $v$ to a leaf/target $w$ can be implemented by re-targetting immediately at $w$ the first robot $r$ to mine $v$, with no trigger of condition \eqref{it: change_target}. Non-lazy moves can thus be seen as a payment in advance of a future movement of a robot to a new target. For practicality, we restrict non-lazy moves in that there must always be at least one miner in all active leaf.

We now define $\Scal_p(k)$, $\Scal_a(k)$ the set of strategies of the player and the adversary in the extended tree-mining game, when at most $k$ miners are involved in the game. For a given strategy of the adversary $s_a \in \Scal_a(k)$, we recall that $\cost(s_p,s_a,k,D)$ is defined as the maximum cost attained while some miner is at depth less or equal to $D$, if the adversary and the player use the pair of strategies $(s_p,s_a)$. Given a strategy $s_p\in \Scal_p(k)$, we define as follows, 
\begin{align*}
f_{s_p}(k,D) &=\max_{s_a \in \Scal_a(k)} \cost(s_p,s_a,k,D).
\end{align*}
It is possible to show that the extended tree-mining game is not easier for the player than the regular tree-mining game. In particular, we have the following extension of Theorem~\ref{th:redtm}.

\begin{theorem} A strategy $s_p \in \Scal_p(k)$ for the extended tree-mining game induces a locally-greedy asynchronous collective tree exploration algorithm $\Bcal$ satisfying, 
$$ {\normalfont\texttt{Moves}(\Bcal,k,T) \leq 2n +f_{s_p}(k,D)},$$
for any tree $T$ with $n$ nodes and depth $D$.
\end{theorem}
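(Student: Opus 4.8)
The plan is to re-run the construction behind Theorem~\ref{th:redtm} essentially verbatim, and then argue that the two extra rules of the extended game do not cost anything. As before, the induced algorithm $\Bcal$ is the locally-greedy algorithm with targets \texttt{TEAM} (Algorithm~\ref{alg:tream}): all $k$ robots begin at the root with target \texttt{root}; whenever condition~\eqref{it: change_target} is raised by a robot $r_t$ that reaches its target $u$ with no incident unexplored edge, the designer reads off the pair $(u,c)$ with $c$ the number of robots strictly below $u$, feeds it to $s_p$ as an adversary move, and then retargets robots so as to realize the configuration $\bx(\tau+1)$ returned by $s_p$ (the $c$ robots below $u$ are sent to the child of $u$ heading their current branch, the rest arbitrarily).

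First I would dispose of rule E1 (New Miners). Since $\Bcal$ physically has all $k$ robots available from the outset, sitting at the root, I simulate this in the extended game by having the simulated adversary add all $k$ miners at round $0$, while $T(0)$ is still the single active leaf \texttt{root}; this incurs zero movement cost and is consistent with every robot targeting \texttt{root}. The adversary then never adds another miner, so we are simply running $s_p$ against the subclass of adversary strategies in $\Scal_a(k)$ realizing the fixed-$k$ regime, and every cost that arises is still bounded by $f_{s_p}(k,D)$.

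Next I would handle rule E2 (Non-Lazy Moves) using the interpretation recalled in Section~\ref{sec:details}: a non-lazy move prescribed by $s_p$ from target $v$ to target $w$ is implemented by retargeting to $w$, at the instant it finishes mining $v$, the first robot to do so, \emph{without} triggering condition~\eqref{it: change_target}. The point to verify is that movement costs still match: in the generalized target bound (the last Proposition of Section~\ref{sec:locallyg}) this retarget contributes $d(p_{t+1}(r),w)-d(p_{t+1}(r),v)=d(v,w)$, since at that instant $p_{t+1}(r)=v$; and $d(v,w)$ is exactly the cost charged to the non-lazy move in the game. Hence the quantity $S$ accumulated by $\Bcal$ still equals, move for move, the cost accumulated in the simulated tree-mining game (now including the $-2c$ discounts and the non-lazy charges), and in particular $S\le f_{s_p}(k,D)$ because, as in Theorem~\ref{th:redtm}, as long as $\Bcal$ has not finished the simulated game has an active leaf at depth $\le D$. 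The runtime bound then closes as before: after $M$ exploration moves $\Bcal$ has revealed at least $\tfrac12(M-S)$ edges, a tree on $n$ nodes has fewer than $n$ edges, so $M-S\le 2n$ and $M\le 2n+f_{s_p}(k,D)$.

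The only genuinely delicate point — the one I would write out carefully — is the bookkeeping in the third paragraph: that every "robot finishes mining one of the target-leaves" event in $\Bcal$ corresponds to exactly one game event (an adversary $(u,c)$ move followed by the $s_p$ response, possibly together with queued non-lazy retargets that are paid for when executed), that the set of past-and-present targets always forms a well-defined rooted tree containing \texttt{root}, that $(u,c)$ is always a legal adversary move (one needs $1\le c+1\le x_u(\tau)$, which holds since the robot currently located at $u$ is not among the $c$ robots strictly below it), and that the correspondence between game rounds and exploration moves keeps the partial sums of $S$ dominated by $f_{s_p}(k,D)$ throughout, not just at the end. Everything else is a transcription of the proof of Theorem~\ref{th:redtm}.
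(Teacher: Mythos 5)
Your proposal is correct and follows essentially the same route as the paper: the paper's own proof simply reuses the \texttt{TEAM} construction of Theorem~\ref{th:redtm}, starts the board with all $k$ miners at the root (your simulation of rule E1), and implements each non-lazy move by retargeting at $w$ the first robot to mine $v$ before condition~\eqref{it: change_target} is raised, declaring "the rest of the proof unchanged." Your additional verification that the retargeting charge is dominated by $d(v,w)$ and that non-lazy moves are paid in advance (so the partial sums of $S$ stay below the game cost throughout) is exactly the bookkeeping the paper leaves implicit.
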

\begin{proof}
    We assume that we are given a strategy $s_p \in \Scal_p(k)$ for the extended tree-mining game. The exploration algorithm that we consider is entirely identical to that described in Section \ref{sec:newred}. The board starts with $k$ miners at the root, and the strategy $s_p$ is called every time condition \eqref{it: change_target} is satisfied.  If the strategy $s_p$ performs a non-lazy move at some instant $t$, we re-target at $w$ the next robot that mines $v$, before it triggers condition \eqref{it: change_target}. The rest of the proof is unchanged.
\end{proof}

\subsection{Bounded-Horizon Extended Tree-Mining Game}

The strategies that we have described so far are designed to aim for an infinite horizon, represented by the variable $D$ that can take any values in $\Nbb$. In our construction it will be useful to consider strategies with a bounded horizon $\Delta\in \Nbb$. For this purpose, we define here the $\Delta$-horizon tree-mining game, which is a variant of the extended tree-mining game with the following additional rules,
\begin{enumerate}[label=F\arabic*.]
    \item The player may only place $1$ miner on leaves at depth $\Delta$,
    \item The game ends when all active leaves are at depth $\Delta$. 
\end{enumerate}
The following proposition shows that a strategy for the extended tree-mining game induces a strategy for the variant with bounded horizon $\Delta$. 
\begin{proposition}
    For any strategy $s_p\in \Scal_p(k)$ of the extended tree-mining game, there exists a strategy $s_{p,\Delta}\in \Scal_p(k,\Delta)$ of the $\Delta$-horizon tree-mining game, such that the cost at the end of the game is less than $f_{s_p}(k,\Delta)$. Moreover, for any depth $D\leq \Delta$, we have, $ f_{s_{p,\Delta}}(k, D)\leq f_{s_p}(k, D)$ where $f(\cdot)$ is defined for the $\Delta$-horizon game analogously as for the regular game. 
\end{proposition}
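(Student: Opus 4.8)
The goal is to convert an infinite-horizon strategy $s_p \in \Scal_p(k)$ into a $\Delta$-horizon strategy $s_{p,\Delta}$ satisfying the two stated bounds. The natural approach is to have $s_{p,\Delta}$ \emph{simulate} $s_p$ on the same sequence of adversary moves, and only deviate from it through a ``truncation'' mechanism enforcing rules F1 and F2, namely that leaves at depth $\Delta$ carry exactly one miner and that the game terminates once every active leaf sits at depth $\Delta$. Concretely, I would let $s_{p,\Delta}$ maintain, in parallel, a ``shadow'' board played according to $s_p$ against the adversary's moves, while the real $\Delta$-horizon board differs only in that any miner that $s_p$ would place at an active leaf of depth $\Delta$ is instead left there as the unique miner (the shadow board may stack several miners there, the real board keeps one). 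Since in the tree-mining game the player is always allowed to send surplus miners anywhere, and since active leaves at depth $\Delta$ are never going to be chosen by the adversary in a way that forces them deeper in the $\Delta$-horizon game (they are terminal), this re-routing is always legal; the only subtlety is guaranteeing that whenever $s_p$ would over-populate a depth-$\Delta$ leaf, there is some other active leaf (of depth $<\Delta$) to which the extra miners can be diverted — which holds as long as the game has not yet ended, precisely because the game ending is \emph{defined} as all active leaves being at depth $\Delta$.

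The key steps, in order, are: (i) formally define the coupling between the shadow run of $s_p$ and the real run of $s_{p,\Delta}$, making precise how excess miners at depth-$\Delta$ leaves are reassigned (e.g. always to the active leaf of smallest depth, or the one with fewest miners, to avoid circular dependencies); (ii) verify that $s_{p,\Delta}$ is a \emph{valid} $\Delta$-horizon strategy — each adversary move is answered with at least one miner per new child, depth-$\Delta$ leaves hold exactly one miner (rule F1), and the game indeed halts when all active leaves reach depth $\Delta$ (rule F2); (iii) bound the cost. For (iii) the crucial observation is that the transport cost paid by $s_{p,\Delta}$ in responding to any adversary move is \emph{no larger} than that paid by $s_p$ on the shadow board: diverting a miner to a depth-$<\Delta$ active leaf instead of to the (deeper, or equally deep) depth-$\Delta$ leaf prescribed by $s_p$ can only shorten the transport distance, because the would-be destination was an ancestor-or-equal-depth node of the actual one — more carefully, one argues move-by-move that the cumulative real cost is dominated by the cumulative shadow cost. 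Combined with the fact that all active leaves in the shadow run stay at depth $\le \Delta$ throughout (the shadow game is terminated in lockstep with the real one), we get that the shadow cost never exceeds $f_{s_p}(k,\Delta)$ by definition of $f_{s_p}$, giving the first claim. For the second claim, $f_{s_{p,\Delta}}(k,D) \le f_{s_p}(k,D)$, I would repeat the same cost-domination argument but stop accounting at the first round at which some real miner reaches depth $D$: at that round, the shadow board also has a miner at depth $\ge D$ (the shadow board is at least as deep everywhere, since real miners are only ever shifted to \emph{shallower} positions than $s_p$ would pick), so the shadow cost so far is at most $f_{s_p}(k,D)$, and the real cost so far is at most the shadow cost.

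\textbf{Main obstacle.} The delicate point is step (iii), specifically making the ``real cost $\le$ shadow cost, round by round'' inequality airtight. The reassignment of surplus miners happens repeatedly and the real and shadow configurations drift apart over time, so one cannot naively compare single moves in isolation; one needs an invariant — something like ``there is a transport plan from the real configuration to the shadow configuration whose cost plus the accumulated real cost is at most the accumulated shadow cost'', i.e. a potential-function / coupling argument reminiscent of an exchange argument. Establishing that this invariant is preserved under (a) an adversary move answered by $s_p$ on the shadow and by the truncated rule on the real board, and (b) non-lazy moves, and (c) the addition of new miners, is where the real work lies; everything else (validity of F1/F2, the final substitution into $f_{s_p}$) is routine once the invariant is in hand.
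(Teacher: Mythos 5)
Your construction (a shadow run of $s_p$ plus a truncation rule in which \emph{you, the player} divert surplus miners from over-populated depth-$\Delta$ leaves to other active leaves) differs from the paper's, and the difference is exactly where your argument breaks. The per-move cost comparison you invoke is false: if $s_p$ sends a miner from $u$ to a depth-$\Delta$ leaf $v$ and you instead send it to some other active leaf $w$ of depth $<\Delta$, then $w$ being shallower does not make it closer --- $w$ is generally in a different branch and $d(u,w)$ can vastly exceed $d(u,v)$; the ``would-be destination is an ancestor-or-equal-depth node of the actual one'' claim is simply not true in a tree. Nor is the potential invariant you propose (real cost plus transport distance between real and shadow configurations bounded by shadow cost) preserved: at the very first diversion the real cost increases by $d(u,w)$ per diverted miner, the transport discrepancy increases by $d(w,v)$, and the shadow cost only by $d(u,v)$, so you would need $d(u,w)+d(w,v)\leq d(u,v)$, which is the wrong direction of the triangle inequality; and since the stranded shadow miners at $v$ never move again, there is no later payback. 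More fundamentally, your truncated player performs moves that $s_p$ never prescribed, so their cost cannot be charged against $f_{s_p}$ at all --- this is the missing idea, and you yourself flag it as the unproved ``main obstacle''.

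The paper's proof removes the obstacle by never letting the bounded-horizon player improvise. Whenever $s_p$'s response on the shadow board $\Tcal$ would leave more than one miner on a node deeper than the horizon, one \emph{emulates additional adversary moves} on $\Tcal$: the simulated adversary repeatedly queries such an over-populated deep node and gives it a single child, forcing $s_p$ itself to decide where the surplus miners go; this loop terminates because $f_{s_p}(k,\Delta)<\infty$. The bounded-horizon strategy $s_{p,\Delta}$ is then defined as the restriction of the resulting shadow play to depth at most $\Delta$, maintaining an isomorphism between $\Tcal_\Delta$ and the truncation of $\Tcal$. With this device every move on $\Tcal_\Delta$ is a move that actually occurred (and was paid for) on $\Tcal$, so the cost domination is immediate rather than needing a coupling invariant, and both claimed bounds follow since whenever the real board has a miner at depth at most $D\leq\Delta$ so does the shadow board. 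Your write-up as it stands does not supply a substitute for this mechanism, so the central inequality of the proposition remains unproven.
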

\begin{proof}
We start by defining $s_{p,\Delta}\in\Scal_p(k,\Delta)$ using the unbounded horizon strategy $s_p\in \Scal_p(k)$. Throughout, we will consider a board $\Tcal$ of the unbounded horizon tree-mining game and a board $\Tcal_\Delta$ of the bounded-horizon tree-mining game. Both board are reduced to one node at the start. For every move of the adversary on $\Tcal_\Delta$, we will emulate one or multiple moves of the adversary on $\Tcal$ and use the response of $s_p$. We will maintain that $\Tcal_\Delta$ remains isomorphic to $\Tcal$ when restricted to nodes at depth less than or equal to $\Delta$. We now describe the response of the player on $\Tcal_\Delta$ after the move of the adversary. We first mimic the move of the adversary on the tree $\Tcal$ and consider the two following cases: 
\begin{itemize}
    \item[--] The player $s_p\in \Scal_p(k)$ responds on $\Tcal$ by sending miners to nodes at depth less than or equal to $\Delta$. After this move, $\Tcal$ satisfies the fact that no node at depth more than $\Delta$ contains more than one miner. In this case, the player $s_{p,\Delta}$ can respond exactly like $s_p$.
    \item[--] The player of the unbounded game $s_p\in \Scal_p(k)$ responds in a way such that some node of $\Tcal$ at depth more than $\Delta$ contains contains more than two miners. We then emulate that the adversary will query this node and provide it with exactly one child in $\Tcal$. We iterate this extension until all nodes below $\Delta$ in $\Tcal$ have $1$ miner exactly. This must happen in finite time since $f_{s_p}(k,\Delta)<\infty$. When this is the case, $s_{p,\Delta}$ is defined as the move in $\Tcal_\Delta$ that allows to maintain the isomorphism with $\Tcal$ for all nodes at depth less than or equal to $\Delta$.
\end{itemize}
For the above strategy $s_{p,\Delta}\in \Scal_p(k,\Delta)$, it is obvious that at any instant the cost of $\Tcal_\Delta$ is less than the cost of $\Tcal$ because all moves in $\Tcal_\Delta$ also took place in $\Tcal$, thus for any $D\leq \Delta, f_{s_{p,\Delta}}(k,D)\leq f_{s_p}(k, D).$
\end{proof}

\subsection{Construction of the Recursive Strategy}
We can now fully construct by induction the strategies $\{s^{(k)}\in \Scal_p(k)\}$ of the extended tree-mining game. The induction is initialised with the simple strategy $s^{(2)}$ described in Section \ref{sec:ksmall}. 
We now assume by induction that the strategies $\{s^{(k')} : k'<k\}$ have been defined and turn to the definition of $s^{(k)}$. The strategy works by iterating epochs as described in Section \ref{sec:generaltm}. An epoch consists in a sequence of moves going from a $(D,d)$-structure, with integers $d<D$, to some $(D',d')$-structure with integers $D'$ and $d'$ satisfying one of the two following conditions, 
\begin{align}
    D+(D-d)&\leq D',\label{eq:split2}\\
    d+(D-d)&\leq d'.\label{eq:join2}
\end{align} 
For simplicity, we will now denote $D-d$ by $\Delta$. We describe an epoch of $s^{(k)}$ that starts from some arbitrary $(D,d)$-board:
\begin{itemize}
\item[--] First, if the board contains leaves $\ell$ at depth $D_{\ell} \geq D+ \Delta$, those leaves will have their population reduced to $1$ miner through non-lazy moves. The corresponding cost of at most $k(D_{\ell}-d+\Delta)$ will not be taken into account, because as we shall see below, we can assume that it was provisioned by the previous epoch.  
\item[--] Next, the load of all the active leaves $\ell$ with depth $D_\ell \leq D+\Delta-1$ is balanced by non-lazy moves such that the value of $k_t(\cdot)$ differs by one at most on all such leaves. Thus, if there at least two such leaves, their load is at most $\ceil{k/2}$. This step induces a movement cost of at most $4k\Delta$ since any two such leaves may be distant by $4\Delta$ and at most $k$ robots have to move. 
\item[--] Then, for all the active leaves $\ell$ with depth $D_\ell \leq D+\Delta-1$, an instance of $s^{(k-1)}$ denoted by $s^{(k-1)}_\ell$ is started at $\ell$ with bounded horizon $\Delta_\ell = \Delta-(D_\ell-D)$. The current epoch will end whenever one of the two following conditions is met \eqref{eq:split2} all instances are \textit{finished}, i.e. they have reached their horizon ; \eqref{eq:join2} all miners are in the same instance. While the epoch is not finished, we apply the following rules, 
\begin{itemize}
\item When an instance $s^{(k-1)}_\ell$ attains its horizon, i.e. when it finishes, all excess robots are evenly distributed in other \textit{unfinished} instances, making sure that the number of robots differs by at most one across unfinished instances. There is a total of at most $k^2$ such reassignments, thus, the corresponding movement cost is bounded by $4k^2\Delta$. 
\item When a new miner arrives, it is added to the \textit{unfinished} instance $s^{(k-1)}_{\ell}$ with the smallest load, again, the number of miners across unfinished instances differs by at most one. Remember that there is no movement cost associated to the arrival of a new miner. 
\end{itemize}
\item[--] Finally, if the epoch stops with condition \eqref{eq:join2}, it may be the case that the final $(D',d')$-structure contains a leaf $\ell'$ at depth $D_{\ell'}\geq D'+\Delta'$ where $\Delta' = D'-d'$. Note that however, $D_{\ell'}\leq D+\Delta$ since we used bounded horizon strategies. Thus the current epoch can provision a budget of $2k\Delta\geq k(D'_{\ell}-d)$ to cover the retrieval of these miners in future epochs. 
\end{itemize}

\begin{proposition}
    The infinite-horizon strategy $s^{(k)}\in \Scal_p(k)$ defined above satisfies that for any $k\geq 2$ for any $D\in \Nbb$, and for any $k'\leq k$,
    $f_{s^{(k)}}(k',D) \leq c_{k'} D,$
    where $(c_k)_{k\geq 2}$ is defined by
    $c_2=2$ and $c_k = c_{k-1}+2kc_{\ceil{k/2}}+20k^2$.
\end{proposition}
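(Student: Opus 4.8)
The plan is an induction on $k$, mirroring the proof of Theorem~\ref{th:mainth} but with the abstract recursive calls replaced by the concrete bounded-horizon instances $s^{(k-1)}_\ell$ just constructed. The base case $k=2$ is Proposition~\ref{prop: kequals2}, which gives $f_{s^{(2)}}(2,D)=0\le 2D=c_2D$ (the case of a single miner is trivial, since no miner ever moves). For the inductive step fix $k\ge 3$ and assume, as a strong hypothesis, that for every $j<k$ the strategy $s^{(j)}$ satisfies $f_{s^{(j)}}(j',D)\le c_{j'}D$ for all $j'\le j$. Fix $k'\le k$ and run $s^{(k)}$ against an arbitrary adversary that ever introduces at most $k'$ miners; the goal is to show that the accumulated cost, measured at any moment at which some miner is at depth $\le D$, is at most $c_{k'}D$.

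The heart of the argument is a per-epoch cost bound, obtained by re-running the analysis of Section~\ref{sec:generaltm} (equations~\eqref{eq:split2}--\eqref{eq:join2}, the cost table, and~\eqref{eq:dynamic}) with the sub-instances now being the explicitly-defined $s^{(k-1)}_\ell$. An epoch started from a $(D,D)$-structure lasts one round and costs at most $k'\le c_{k'}$, ending at a $(D+1,d')$-structure with $d'\in\{D,D+1\}$, so it causes no difficulty. For an epoch started from a $(D,d)$-structure with $d<D$ — which in particular has at least two active leaves, all of depth $\le D$ — three facts make the analysis go through. First, while at least two instances are unfinished each carries load at most $\lceil k'/2\rceil$, and once a single unfinished instance remains the epoch has not yet terminated via~\eqref{eq:join2}, so a miner still sits in a finished instance and the survivor carries at most $k'-1$ miners; since $\lceil k'/2\rceil\le k-1$ and $k'-1\le k-1$, the induction hypothesis applies to all sub-instances. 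Second, the bounded-horizon proposition above lets one import these bounds for the truncated runs $s^{(k-1)}_\ell$, namely $f_{s^{(k-1)}_\ell}(m,\delta)\le f_{s^{(k-1)}}(m,\delta)\le c_m\delta$. Third, rule E1 makes the arrival of a new miner free and rule E2 legalizes the non-lazy load-balancing and re-targeting moves, so every move in the construction is a valid move of the extended game. Collecting the three parts of the cost — the internal cost of the instances, the initial load-balancing, and the re-targeting moves triggered when an instance finishes or a miner arrives — exactly as in Section~\ref{sec:generaltm}, and using $D-d\le(D'-D)+(d'-d)$, an epoch from a $(D,d)$-structure to a $(D',d')$-structure costs at most $a_{k'}(D'-D)+b_{k'}(d'-d)$ with $a_{k'}=c_{k'-1}+k'c_{\lceil k'/2\rceil}+10(k')^2$ and $b_{k'}=k'c_{\lceil k'/2\rceil}+10(k')^2$, so that $a_{k'}+b_{k'}=c_{k'}$; this is the analogue of~\eqref{eq:dynamic}.

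With the epoch bound in hand I would conclude exactly as in Lemma~\ref{lemma:lemma:generalk} and the lemma following it: a straightforward induction over epochs, initialized at the $(0,0)$-structure with zero cost, shows that whenever the board is in a $(D,d)$-structure the accumulated cost is at most $a_{k'}D+b_{k'}d$; and for a board in which some miner is at depth $D'\le D$, letting the adversary kill every other active leaf forces a $(D',D')$-structure, hence a cost at most $(a_{k'}+b_{k'})D'=c_{k'}D'\le c_{k'}D$. This yields $f_{s^{(k)}}(k',D)\le c_{k'}D$ and closes the induction. The step I expect to require the most care — and the real content of Section~\ref{sec:details} beyond Section~\ref{sec:generaltm} — is the explicit displacement accounting behind the $10(k')^2(D-d)$ term and the second $k'c_{\lceil k'/2\rceil}$ term: one must handle instances that start below depth $D+(D-d)$ (whose horizons exceed $D-d$), count the $\Ocal((k')^2)$ inter-instance re-targetings each of length $\Ocal(D-d)$, and verify the cross-epoch provisioning argument by which the $2k'(D-d)$ retrieval budget left by a \eqref{eq:join2}-terminated epoch covers the later retrieval of its deep miners. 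Everything else is the telescoping already carried out for Theorem~\ref{th:mainth}, now transplanted from the abstract recursion onto the concrete strategy.
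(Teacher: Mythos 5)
Your proof is correct, and its core is the same as the paper's: verify the per-epoch cost bound \eqref{eq:dynamic} for the concrete strategy by combining the bounded-horizon proposition with the strong induction hypothesis for $s^{(k-1)}$ (sub-instance loads $\le \ceil{k/2}$ while two or more are unfinished, $\le k-1$ for the last survivor, plus the $10k^2\Delta$ displacement accounting), then telescope as in Lemma~\ref{lemma:lemma:generalk} and conclude as in Theorem~\ref{th:mainth}. The one structural difference is your treatment of $k'<k$. The paper dispatches this case in one line by observing that $s^{(k)}$ and $s^{(k')}$ behave identically as long as at most $k'$ miners are present, so $f_{s^{(k)}}(k',D)\le c_{k'}D$ follows directly from the induction hypothesis applied to the lower-order strategy, and the epoch analysis is then carried out only for $k'=k$. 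You instead rerun the epoch analysis uniformly in $k'$; this is legitimate, because every quantity in the accounting scales with the number of miners actually present rather than with $k$ (at most $k'$ instances, loads $\le\ceil{k'/2}$ while at least two are unfinished, last unfinished load $\le k'-1$, at most $(k')^2$ reassignments each of length $\Ocal(D-d)$), but it buys a single uniform argument at the price of exactly the extra bookkeeping you flag at the end, which the paper's coincidence observation sidesteps. One small inaccuracy in that flag: active leaves at depth at least $D+(D-d)$ do not carry instances whose horizons exceed $D-d$; in the detailed construction they receive no instance at all, are trimmed to one miner by non-lazy moves, and the cost of later retrieving those miners is charged to the $2k\Delta$ provisioning budget of the epoch that created them --- which is the mechanism your closing sentence on cross-epoch provisioning correctly identifies as the delicate point.
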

\begin{proof}
It is clear from the construction, that $s^{(k_1)}$ and $s^{(k_2)}$ will have the exact same behaviour for as long as there are at most $\min\{k_1,k_2\}$ miners in the mine. This property is true for $s^{(2)}$ and is preserved by the induction, since an epoch of $s^{(k_1)}$ and $s^{(k_2)}$ are defined similarly and all calls to lower-order strategies will also coincide. As a consequence, we have using the induction hypothesis that for all $k'\leq k-1$, and any $D\in \Nbb$,  $f_{s^{(k)}}(k',D) \leq c_{k'} D$.

We now turn to the analysis of $s^{(k)}$ when there are all $k$ miners in the mine. We remind the reader that due to the discussions in Section~\ref{sec:generaltm}, proving equation \eqref{eq:dynamic} is sufficient to conclude. We thus focus on showing that the cost of an epoch starting from a $(D,d)$-board and leading to a $(D',d')$-board satisfying \eqref{eq:split2} or \eqref{eq:join2} is bounded by, 
\begin{equation}
    f_{s^{(k-1)}}(k-1,D'-D)+kf_{s^{(k-1)}}(\ceil{k/2},\Delta)+10k^2\Delta
\end{equation}
The first term $f_{s^{(k-1)}}(k-1,D'-D)$ comes from the fact that only the last of unfinished bounded-horizon instances $s_{\ell}^{(k-1)}$ will have to deal with more than $\ceil{k/2}$ miners, and never with more than $k-1$ miners. Also, this term appears only if all other instances have reached their horizon and thus it defines the minimum depth $D'$ at the end of the epoch. The second term, $kf_{s^{(k-1)}}(\ceil{k/2},D-d)$ comes from the fact that there are at most $k-1$ other instances $s_{\ell}^{(k-1)}$ that are called, each involving at most $\ceil{k/2}$ miners. Finally, the term in $k^2\Delta$ decomposes as follows,
\begin{itemize}
    \item[--] initial non-lazy rebalancements (cost: $4k\Delta$) ;
    \item[--] rebalancements between instances (cost: $4k^2\Delta$) ;
    \item[--] provisioning for next epoch (cost: $2k\Delta$) ;
\end{itemize}
and is overall bounded by $10k^2\Delta$. The rest of the proof follows from Theorem \ref{th:mainth}. An illustration of the strategy defined here is provided in Figure \ref{fig:keq6}, for the specific case of $k=6$.
\end{proof}

\begin{figure}
\centering
\includegraphics[page=1,width=0.49\textwidth]{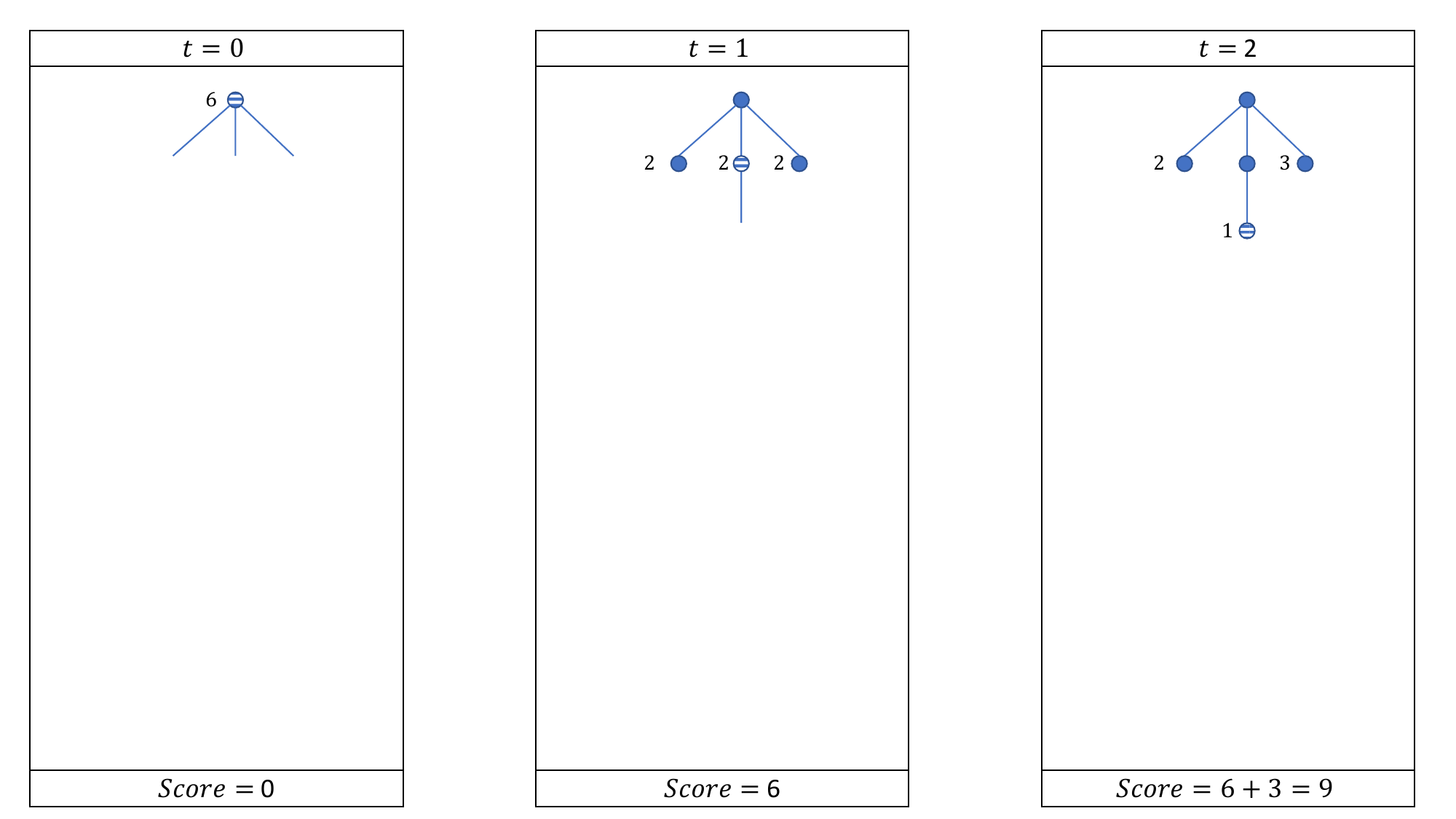}
\includegraphics[page=2,width=0.49\textwidth]{tree-mining-all.pdf}
\includegraphics[page=3,width=0.49\textwidth]{tree-mining-all.pdf}
\includegraphics[page=4,width=0.49\textwidth]{tree-mining-all.pdf}
\includegraphics[page=5,width=0.49\textwidth]{tree-mining-all.pdf}
\includegraphics[page=6,width=0.49\textwidth]{tree-mining-all.pdf}
\includegraphics[page=7,width=0.49\textwidth]{tree-mining-all.pdf}
\includegraphics[page=8,width=0.49\textwidth]{tree-mining-all.pdf}
\caption{Example of the course of the game with $k=6$ miners, for the tree-mining strategy defined above. At every instant, the node chosen by the adversary appears as dashed. In red is highlighted a move of type \eqref{eq:join} that leads to a change in the lowest common ancestor of all active leaves. }
\label{fig:keq6}
\end{figure}

\end{document}